\newtheorem{definition}{Definition}
\newtheorem{theorem}{Theorem}
\newtheorem{lemma}{Lemma}
\newcommand{\norm}[1]{\left\| #1 \right\|}
\def\namedlabel#1#2{\begingroup
	#2%
	\def\@currentlabel{#2}%
	\phantomsection\label{#1}\endgroup
}
\algnewcommand{\Inputs}[1]{%
	\State \textbf{Inputs:}
	\Statex \hspace*{\algorithmicindent}\parbox[t]{.8\linewidth}{\raggedright #1}
}
\algnewcommand{\Initialize}[1]{%
	\State \textbf{Initialize:}
	\Statex \hspace*{\algorithmicindent}\parbox[t]{.8\linewidth}{\raggedright #1}
}
\title{$\alpha$-Rank-Collections: Analyzing Expected Strategic Behavior with Uncertain Utilities}
\date{}
\author{ \href{https://orcid.org/0000-0002-5712-1706}{\includegraphics[scale=0.06]{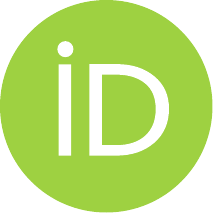}\hspace{1mm}Fabian R.~Pieroth} \\
	Department of Computer Science\\
	Technical University of Munich\\
	\texttt{fabian.pieroth@tum.de} \\
	\And
	\href{https://orcid.org/0000-0001-5491-2935}{\includegraphics[scale=0.06]{orcid.pdf}\hspace{1mm}Martin ~Bichler} \\
	Department of Computer Science\\
	Technical University of Munich\\
	\texttt{bichler@cit.tum.de} \\
}
\begin{document}
\maketitle

\begin{abstract}
Game theory relies heavily on the availability of cardinal utility functions, but in fields such as matching markets, only ordinal preferences are typically elicited. The literature focuses on mechanisms with simple dominant strategies, but many real-world applications lack dominant strategies, making the intensity of preferences between outcomes important for determining strategies. Even though precise information about cardinal utilities is not available, some data about the likelihood of utility functions is often accessible.
We propose to use Bayesian games to formalize uncertainty about the decision-makers' utilities by viewing them as a collection of normal-form games. Instead of searching for the Bayes-Nash equilibrium, we study how uncertainty in utilities is reflected in uncertainty of strategic play. To do this, we introduce a novel solution concept called $\alpha$-Rank-collections, which extends $\alpha$-Rank to Bayesian games. This allows us to analyze strategic play in, for example, non-strategyproof matching markets, for which appropriate solution concepts are currently lacking.
$\alpha$-Rank-collections characterize the expected probability of encountering a certain strategy profile under replicator dynamics in the long run, rather than predicting a specific equilibrium strategy profile. We experimentally evaluate $\alpha$-Rank-collections using instances of the Boston mechanism, finding that our solution concept provides more nuanced predictions compared to Bayes-Nash equilibria. Additionally, we prove that $\alpha$-Rank-collections are invariant to positive affine transformations, a standard property for a solution concept, and are efficient to approximate.
\end{abstract}

\keywords{$\alpha$-Rank \and  computational economics \and decision analysis \and utility theory}

\section{Introduction}
Utility functions are used to analyze preferences and decisions analytically. However, different types of utility functions are used in the game-theoretical analysis. 
Cardinal utility functions assume that the utilities obtained for different outcomes are real-valued. The differences between the utilities of two outcomes are meaningful but not necessarily comparable across individuals. This plays an important role when analyzing decisions under uncertainty. The von Neumann-Morgenstern (vNM) utility function allows a mapping of outcomes to a cardinal utility, and it is commonly assumed in game theory \citep{vonneumannTheoryGamesEconomic1944}. 

A vNM function specifies the preferences of an individual precisely, but it is usually difficult to elicit \citep{hauserAssessmentAttributeImportances1979}. 
However, due to the precise information about a decision-maker's preferences, knowledge of the vNM functions allow for precise solution concepts as well -- the prevalent one being the Nash equilibrium (NE) \citep{nash1950equilibrium}. A NE is a fixed point in strategy space that no player unilaterally wants to deviate from, giving a single (possibly mixed) strategy profile as the solution.

Some streams of literature do require not only cardinal utility but also interpersonal utility comparisons. For example, the Bergson-Samuelson social welfare function is based on utilities of individuals that are comparable \citep{burkReformulationCertainAspects1938}. Likewise, auction theory and large parts of the mechanism design literature assume vNM utility functions and interpersonal utility comparisons. These strong assumptions allow for landmark results such as the dominant-strategy incentive-compatible Vickrey-Clarke-Groves mechanism \citep{nisanAlgorithmicGameTheory2007}. Notably, the Bayes-Nash equilibrium (a generalization of NE for games with incomplete information) provides a solution concept for finding equilibrium in mechanisms without dominant strategies.
For many applications such as spectrum sales or electricity markets, cardinal utilities, and even interpersonal (or inter-firm) utility, comparisons in money might be adequate. However, for others, they are too much to ask for. 

For example, matching with preferences typically assumes that only ordinal utilities can be elicited. 
Ordinal utilities provide only the rankings of utilities received from different bundles of goods or choices. It does not require individuals to specify how much extra utility they received from the preferred bundle of goods or services in comparison to other bundles. Decision-makers are only required to tell which bundles they prefer. While cardinal preferences can be challenged, eliciting only ordinal preferences is much less controversial in applications such as school choice or kidney exchanges. The Gale-Shapley algorithm \citep{galeCollegeAdmissionsStability1962} is arguably the most well-known example, which takes ordinal preferences as an input and computes a stable outcome. Importantly, the mechanism is strategy-proof for one side of the market. The Top Trading Cycles algorithm \citep{shapleyCoresIndivisibility1974} for the housing market enjoys the same property, but also here, preferences are restricted to unit demand. 

Strategy-proof mechanisms are rare, and most mechanisms used in the field do not possess this desirable property. For example, the Boston mechanism is widely used in school choice applications, but we lack a principled approach to studying strategic behavior in mechanisms that are not strategy-proof. 
Unfortunately, Nash equilibrium strategies for ordinal games (such as matching with preferences) are much less well understood. \citet{dasguptaExistenceEquilibriumDiscontinuous1986} were the first to analyze equilibrium in games with discontinuous preferences. \citet{renyNashEquilibriumDiscontinuous2016} and \citet{carmonaExistenceNashEquilibrium2016} discussed existence of Nash equilibria in ordinal games with ordered preferences. See \citet{renyNashEquilibriumDiscontinuous2020} for an up-to-date survey. If an ordinal Nash equilibrium exists, it is a NE for every possible vNM utility function that follows the ordinal preferences of the players. This is a strong requirement that is often not satisfied, and in many ordinal games, ordinal Nash equilibrium does not exist.
Importantly, we lose essential information about the intensity of preferences. However, in applications such as school choice or kidney exchanges, some vNM utility functions are more likely than others. 

In this paper, we study games where we have probabilistic information about the vNM functions of players. For example, in school choice, it might be clear that a student prefers one school much more than another, but it is not exactly clear by how much. We model this probabilistic information about the vNM functions of opponents via types in a Bayesian game. However, we do not search for a Bayes-Nash equilibrium in these games, which share the same problems as standard Nash equilibrium: there can be many equilibria, and finding equilibrium is PPAD-hard \citep{daskalakisComplexityComputingNash2009}.
Instead, we follow a novel approach that is fundamentally different from the NE's normative one: Markov-Conley-Chains (MCCs) and the $\alpha$-Rank algorithm, which were recently proposed for normal-form games \citep{omidshafieiARankMultiAgentEvaluation2019}. 

MCCs describe a solution concept extending the notion of a Nash equilibrium: for a strict Nash equilibrium, there exists an MCC, which is a singleton. Otherwise, the MCCs comprise a larger part of the strategy space describing the dynamics resulting from the underlying vNM utility functions. It is important to note that MCCs are no equilibrium solution concept. Instead, they arise from a simple (and tractable) behavioral model of how players learn. The $\alpha$-Rank algorithm is based on MCCs, and it computes a distribution over the strategy space, representing the probabilities for each strategy that it will invade other existing strategies. This way, it ranks the different strategies and avoids a selection problem inherent in the Nash equilibrium concept. Moreover, the algorithm can be computed in polynomial time\footnote{See Section \ref{subsec:computatial-costs} for details.}, providing an attractive alternative to the Nash equilibrium concept. However, $\alpha$-Rank and MCCs were defined for games with each player's unique vNM utility function. 

Our contribution is threefold: first, we use Bayesian games to model probabilistic information about the vNM functions of other players\footnote{In ordinal mechanisms such as matching markets, these vNM functions obey ordinal preferences by the players. However, in contrast to auction theory, we do not need interpersonal utility comparisons.} as a collection of normal-form games.
Second, we extend $\alpha$-Rank to Bayesian games. This combination provides a principled way to study strategic interaction in games, where this was not possible so far. For example, we can analyze strategies in matching markets that are not strategy-proof. Finally, we use the popular Boston mechanism to illustrate the new method but also apply it to a version of the Hawk-Dove game, where there is uncertainty about other players' types. The study of the Boston mechanism draws on an experimental paper by \citet{featherstone2016boston}. It is particularly insightful because we do not have game-theoretical predictions unless we assume we know the vNM utility functions of all players exactly, which is unreasonable to believe. 
Thirdly, we provide two theoretical results that establish invariance to the choice of the vNM representation and stability towards small variations in the vNM functions itself. Overall, the paper extends the recent literature on the MCC solution concept to Bayesian games and provides a novel approach to analyze ordinal games.

\section{Related Literature}
The field of utility theory is concerned with answering what strategic entities are striving for \citep{fishburnUtilityTheory1968}. 
After a long history of thought, the vNM utility function has become standard for describing preferences in the presence of risky outcomes of different choices and in game theory. 
Von Neumann and Morgenstern showed that, if an individual's preferences satisfy certain axioms of rational behavior, there exists a utility function such that the individual acts as if to maximize this utility~\citep{vonneumannTheoryGamesEconomic1944}. The vNM utility function is a cardinal utility function, i.e., utility is measured on an interval scale \citep{handler1984measurement}.

However, well-known negative results demonstrate the difficulty of aggregating preferences to a social choice with only ordinal utility functions available. Arrow's general impossibility theorem is one of the central insights, showing that it is impossible to have a social welfare function that satisfies a certain set of apparently reasonable conditions \citep{arrowSocialChoiceIndividual1951}. Based on this impossibility, \citet{satterthwaiteStrategyproofnessArrowConditions1975} and \citet{gibbardManipulationVotingSchemes1973} showed that no (non-dictatorial) mechanism with more than two possible outcomes is dominant-strategy incentive-compatible (aka. strategy-proof). The limitations of deterministic mechanisms led to significant attention for probabilistic social choice more recently \citep{brandt2017rolling}.

While it can be difficult to elicit vNM utility functions precisely \citep{hauserAssessmentAttributeImportances1979}, an ordinal preference relation is easier to determine, yet any information about the intensity of preferences is lost.
In matching with preferences where the mechanisms only take ordinal preferences into account, participants typically have stronger preferences for some alternatives than for others. For example, a student might have a very strong preference to join an elite high school around the corner compared to any other school in her vicinity when it comes to school choice. While it might be difficult to describe her preference with a unique vNM function, it would well be possible to define a probability distribution over a collection of vNM utility functions. 
\citet{azizStableMatchingUncertain2020} and \citet{azizStableMatchingUncertain2022} study stable matchings under different models of uncertainty over players' ordinal preferences. This differs from our approach, as they focus on existence of stable matchings and mechanisms to compute them, whereas we are interested in players' strategic behavior under uncertainty over cardinal preferences. Instead, \citet{amanatidisPeekingOrdinalCurtain2021a} propose a different approach to mitigate the information loss by allowing a few cardinal queries about agents' preferences. We argue that our approach is strategically easier, as an oracle for cardinal utilities is usually not available and eliciting these directly from the agents distorts the strategic interaction. 

The ordinal Nash equilibrium (ONE) provides one possibility to analyze strategies in ordinal games such as matching markets \citep{cruzOrdinalGamesGeneralized2000, amorEquilibriaOrdinalGames2017, ehlersIncompleteInformationSingleton2007}. ONE only takes purely ordinal preferences into account. One must consider every possible vNM function that abides by the underlying preference structure. Not surprisingly, ordinal Nash equilibria rarely exist. 
In most applications, the case of agents having no intensity about their preferences between different choices is unlikely. But then agents' strategies cannot be studied without taking these intensities into account. 
\citet{harsanyiGamesIncompleteInformation1967} introduced the concept of Bayesian games that express incomplete information via different types with a corresponding prior distribution over the types. We use this concept to model uncertainty about players' preferences. However, we differ from Harsanyi in assuming that uncertainty over players' types persists until the ex-post stage.

This approach shifts the focus to determining what the expected strategic behavior for such a collection of games is, where players are unsure about others and their own utilities. Answering this question makes it essential to compare strategic behavior for different instances of normal-form games.

The standard solution concept for Bayesian games is the Bayes-Nash equilibrium (BNE), which generalizes the notion of a NE. However, the NE already suffers from several problems. 
\citet{daskalakisComplexityComputingNash2009} showed that finding NE is PPAD complete, and \citet{rubinstein2016settling} demonstrated that it is hard to approximate. Furthermore, a single game can have many different NE, leading to the equilibrium selection problem \citep{harsanyiSolutionConceptNperson1976}. These issues make the BNE or NE challenging.
\citet{seltenReexaminationPerfectnessConcept1975} studied the stability of NE under perturbations of the players' strategies and introduced the notion of trembling hand. One defines a sequence of perturbed games, where each player has to play all available strategies with at least a probability of $\epsilon_n$. With $\epsilon_n \rightarrow 0$, one observes the NE in the perturbed games. If a sequence of NE exists in the perturbed games that converge to a NE in the original game, this NE is said to be a trembling hand perfect NE. Unlike the NE, a trembling hand perfect NE is stable under small changes in the vNM values. However, it still suffers from intractability and multiplicity \citep{DBLP:conf/sagt/HansenMS10}. 

\citet{papadimitriouGameDynamicsMeaning2019} propose to follow a more holistic approach than single stationary points as game outcome. Instead, they focus on game dynamics emerging from learning, i.e., adapting the players' behavior due to new experiences. \citet{omidshafieiARankMultiAgentEvaluation2019} builds on this narrative and introduces the concept of Markov-Conley-Chains (MCCs), which combines learning behavior with notions of convergence and stability of dynamical systems. MCCs extend the notion of NE, e.g., for a strict NE, there exists an MCC, which is a singleton consisting of the strict NE. On the other hand, if the NE is not strict, then MCCs may consist of a larger part of the strategy space, incorporating the agents' cycling or possibly even chaotic behavior. 
In particular, \citet{omidshafieiARankMultiAgentEvaluation2019} propose $\alpha$-Rank, a new algorithm to rank strategies and MCCs, solving the selection problem for this solution concept. It uses the finite population replicator dynamics model with selection intensity $\alpha$ to create a stationary distribution over the strategy space. 
The distribution measures the long-time behavior of the learning dynamics and ranks the different strategies according to the basins of attraction in the evolutionary dynamics. Furthermore, with increasing $\alpha$, the Markov chain associated with the derived stationary distribution coincides with the MCC. 
$\alpha$-Rank is efficient in the sense that it finds the stationary distribution for a given game dynamics in polynomial time. This and its connection to the replicator dynamics makes it arguably a better predictor for systems with strategically interacting agents. We use $\alpha$-Rank at the heart of our newly proposed solution concept and show that it is continuously differentiable in changes in the vNM functions.

There have been several papers that build on the $\alpha$-Rank paper's results. 
\citet{rowlandMultiagentEvaluationIncomplete2019}, \citet{duEstimatingARankFew2021}, and \citet{rashidEstimatingARankMaximizing2021} considered incomplete information about the true pay-off matrix, i.e., one can only sample the matrix's entries. They propose methods to reduce the number of samples needed to get the correct $\alpha$-Ranking from the approximated pay-off matrix with high probability. Note that this is a conceptually different approach to ours, as we explicitly want to capture the variability in the player's pay-offs while they see incomplete information as noisy measurements.
The joint strategy space grows exponentially with the number of agents, which results in $\alpha$-Rank only being applicable to moderate games. 
\citet{yangAlphaAlphaRank2020} use an oracle mechanism in combination with a stochastic solver, bypassing the need to build the full pay-off matrix before estimating the $\alpha$-Rank distribution. Especially for larger games, determining our newly proposed solution concept relies on approximation methods. Therefore, a more efficient computation of $\alpha$-Rank distributions is crucial to extending our method to larger domains.

\section{Preliminaries}
Let us first discuss the properties of ordinal and cardinal utilities before we introduce $\alpha$-Rank.

\subsection{Ordinal and Cardinal Utility Theory} \label{subsec:ordinal-and-cardinal-utilities}

In order to introduce central notions in utility theory, we revisit the well-known Hawk-Dove game \citep{kimStatusSignalingGames1995}. Two entities are competing over an indivisible resource that they assign a value of $V_1$ and $V_2$, respectively. They can choose between two actions. Either they choose to be a Hawk and fight over the resource or act as a Dove and retreat. If both choose Hawk, there will be a war where they win or lose with a $50\%$ probability. Winning results in getting the resource while losing costs $C_1$ and $C_2$, respectively. If one chooses Hawk while the other decides to play Dove, the Hawk player receives the resource while the other receives nothing $N_i$ for $i$ playing Dove. If both choose to play Dove, each player has a chance of $50\%$ of getting the resource or receiving nothing. The payoff matrix of the general Hawk-Dove game is depicted in Table \ref{tab:general-hawk-dove-pay-off-matrix}. In this interaction, each player has a set of different outcomes, where receiving the resource $V_i$ is preferred over getting nothing $N_i$, while this is better than paying the cost of war $C_i$. We say that player $i$ has the ordinal preference relation $\succeq_{i}^{\text{ord}}$ over the outcomes $\mathcal{O}_i=\{V_i, N_i, C_i\}$ with $V_i \succeq_{i}^{\text{ord}} N_i \succeq_{i}^{\text{ord}} C_i$. 
\begin{table}[ht]
	\centering
	\caption{The general Hawk-Dove interaction's pay-off matrix.}
	\label{tab:general-hawk-dove-pay-off-matrix}
	\begin{tabular}{l|c|c}
		&
		Hawk &
		Dove \\ \hline
		Hawk &
		\begin{tabular}[c]{@{}c@{}}$\frac{1}{2}V_1 + \frac{1}{2}C_1$, \\ \\ $\frac{1}{2}V_2 + \frac{1}{2}C_2$\end{tabular} &
		$V_1, N_2$ \\ \hline
		Dove &
		$N_1, V_2$ &
		\begin{tabular}[c]{@{}c@{}}$ \frac{1}{2}V_1 + \frac{1}{2}N_1$,\\ \\ $\frac{1}{2} V_2 + \frac{1}{2}N_2 $\end{tabular}
	\end{tabular}
\end{table}
Even though ordinal relations are available, one cannot analyze this interaction as an ordinal game. One would need to have knowledge about a ranking of the probabilistic outcomes $\frac{1}{2} V_i + \frac{1}{2} C_i$ and $\frac{1}{2} V_i + \frac{1}{2} N_i$. For example, does player 1 value the outcome $\frac{1}{2} V_1 + \frac{1}{2}C_1$ higher or lower than receiving nothing. This depends on \emph{how much better} he values $V_1$ to $N_1$ and \emph{how much worse} he ranks $C_1$ compared to $N_1$. The preference relation $\succeq_{i}^{\text{ord}}$ does not provide this information. One could argue that one needs to add the cases $\frac{1}{2} V_i + \frac{1}{2} C_i$ and $\frac{1}{2} V_i + \frac{1}{2} N_i$ to the set of outcomes, leading to an ordinal game. However, there could be only a few outcomes, but many different lotteries over these outcomes, and each additional ranking increases measurement difficulty. Furthermore, adding a distribution over outcomes as an outcome by itself and incorporating it into the ordinal preference relation $\succeq_{i}^{\text{ord}}$ restricts the interaction. 
There is no principled way to analyze interaction with only ordinal utilities over outcomes where decisions do not lead to a deterministic outcome.

Cardinal utilities in the form of a vNM function allow us to resolve these problems. 
Suppose an entity is faced with making a decision that can result in different outcomes $\mathcal{O}=\{o_1, \dots, o_K\}$. However, it is not deterministic which action results in which outcome. Instead, different actions can result in different lotteries over outcomes $L \in \Delta(\mathcal{O})$, where $\Delta(\mathcal{O})$ denotes the set of probability distributions over $\mathcal{O}$. If one wants to decide or predict which action the individual is going to make, one needs to know which action results in the most preferred lottery over the outcomes. These lotteries over outcomes are ranked by the individual's inherent preference relation $\succeq^{\text{vNM}}$. \citet{vonneumannTheoryGamesEconomic1944} showed that there exists a function $U: \mathcal{O} \rightarrow \mathbb{R}$ assigning a number to each outcome that is consistent with $\succeq^{\text{vNM}}$, if it satisfies certain rationality axioms \footnote{For a more detailed discussion of the axioms, see Chapter 5 of \citep{bonanno2018game}}. The individual then chooses the action that yields the highest expected utility.
The knowledge of vNM functions for a given interaction leads to the most basic formulation of strategic interaction, namely normal-form games.

\begin{definition}
	A normal-form game with $N$ players is described by a Tuple $(\mathcal{N}, \mathcal{S}, u)$.
	\begin{itemize}
		\item $\mathcal{N} = \{1, \dots N\}$ is a set of players.
		\item $\mathcal{S} = \mathcal{S}_1 \times \mathcal{S}_2 \times \cdots \times \mathcal{S}_N$ is the set of strategy profiles, where $\mathcal{S}_i$ is a finite set of strategies available to player $i \in \mathcal{N}$. A strategy profile $s \in \mathcal{S}$ is denoted by $s = (s_i, s_{-i})$, where the index $-i$ denotes the set $\mathcal{N} \setminus {i}$.
		\item $u = (u_1, \dots, u_N)$ is a tuple of utility functions, where $u_i : \mathcal{S} \rightarrow \mathbb{R}$ is player $i$'s utility function. 
	\end{itemize}
\end{definition}

A normal-form game implicitly incorporates vNM functions over outcomes in the following way. Each player $i$ has a set of possible outcomes $\mathcal{O}_i$ and a corresponding vNM function $U_i: \mathcal{O}_i \rightarrow \mathbb{R}$. Denote with $L_i(s) \in \Delta \mathcal{O}_i$ the lottery over player $i$'s outcomes given a strategy profile $s \in \mathcal{S}$. Then agent $i$'s utility function is given by $u_i(s) = \mathbb{E}_{o_i \sim L_i(s)}[U_i(o_i)]$.

Cardinal utility functions make much stronger assumptions, but very different outcomes can arise in equilibrium depending on these assumptions.
For example, consider the following three instances of the Hawk-Dove game, with different vNM functions shown in Tables \ref{tab:hawk-dove-prisoners-dilemma-instance}, \ref{tab:hawk-dove-battle-of-sexes-instance}, and \ref{tab:hawk-dove-mixed-pd-bos-instance}.

\begin{table}
	\begin{minipage}[b]{.27\textwidth}
		\centering
		\caption{Hawk-Dove as Prisoners' Dilemma with\\
			$U_1(V_1) = U_2(V_2) = 4$, $U_1(N_1) = U_2(N_2) = 0$, $U_1(C_1) = U_2(C_2) = -2$.}
		\label{tab:hawk-dove-prisoners-dilemma-instance}
		\begin{tabular}{l|c|c}
			& Hawk   & Dove   \\ \hline
			Hawk & $1, 1$ & $4, 0$ \\ \hline
			Dove & $0, 4$ & $2, 2$
		\end{tabular}
	\end{minipage}\qquad
	\begin{minipage}[b]{.27\textwidth}
		\centering
		\caption{Hawk-Dove as Anti-Coordination game with \\
			$U_1(V_1) = U_2(V_2) = 2$, $U_1(N_1) = U_2(N_2) = 0$, $U_1(C_1) = U_2(C_2) = -4$.}
		\label{tab:hawk-dove-battle-of-sexes-instance}
		\begin{tabular}{l|c|c}
			& Hawk   & Dove   \\ \hline
			Hawk & $-1, -1$ & $2, 0$ \\ \hline
			Dove & $0, 2$ & $1, 1$
		\end{tabular}
	\end{minipage}\qquad
	\begin{minipage}[b]{.28\textwidth}
		\centering
		\caption{Hawk-Dove as mixed interaction with \\
			$U_1(V_1) = 4$, $U_1(N_1) = 0$, $U_1(C_1) = -2$, $U_2(V_2) = 2$, $U_2(N_2) = 0$, $U_2(C_2) = -4$.}
		\label{tab:hawk-dove-mixed-pd-bos-instance}
		\begin{tabular}{l|c|c}
			& Hawk   & Dove   \\ \hline
			Hawk & $1, -1$ & $4, 0$ \\ \hline
			Dove & $0, 2$ & $2, 1$
		\end{tabular}
	\end{minipage}
\end{table}

Table \ref{tab:hawk-dove-prisoners-dilemma-instance} is an instance of the famous Prisoner's Dilemma, where (Hawk, Hawk) is the unique Nash Equilibrium. The second example in Table \ref{tab:hawk-dove-battle-of-sexes-instance} constitutes a Anti-Coordination interaction, where three different NE exist. These are (Hawk, Dove), (Dove, Hawk), and a mixed equilibrium. The final example shown in Table \ref{tab:hawk-dove-mixed-pd-bos-instance} consists of an interaction where player 1 has the same vNM function that resulted in the Prisoner's Dilemma, and Player 2's vNM function equals the one from the players of the Anti-Coordination interaction. Here, the unique NE is the pure strategy profile (Hawk, Dove). The general Hawk-Dove game and these three realizations with different vNM functions illustrate that ordinal utilities over outcomes may define an interaction semantically. However, the dynamics very much depend on the additional information provided by cardinal preferences.

\subsection{Markov-Conley-Chains and \texorpdfstring{$\alpha$}{alpha}-Rank} \label{subsec:alpha-rank}

As we discussed earlier, the Nash equilibrium has been challenged. \citet{omidshafieiARankMultiAgentEvaluation2019} recently proposed the alternative concept of Markov-Conley chains (MCCs). It is grounded in the dynamical systems perspective, where players follow natural learning dynamics \citep{papadimitriouGameDynamicsMeaning2019}. The resulting dynamical system by applying, e.g., the replicator dynamics to a game, decomposes into transient and chain components. The sink chain components, i.e., recurrent parts of the system that the dynamics stay in once reached, are viewed as outcomes of the game. 
An MCC is defined on the game's response graph, a directed graph where each strategy profile $s \in \mathcal{S}$ is a node. An edge exists if and only if there is only a single deviating player $i$ whose utility does not decrease when transitioning from strategy profile $s$ to $s^{\prime}$.  An MCC is a Markov-chain's stationary distribution within a sink strongly connected component of the response graph. Importantly, any sink chain component of the game dynamics contains at least one sink strongly connected component of the response graph and, therefore, also a corresponding MCC. 
This differs from the Nash equilibrium's normative approach, which tells players how to play. Instead, an MCC captures the long-term behavior of players involved in the interaction by eliminating transient strategy profiles while ranking non-transient profiles by their strength.

However, MCCs still suffer from multiplicity. So, to solve the selection problem, \citet{omidshafieiARankMultiAgentEvaluation2019} proposed $\alpha$-Rank, which provides a unique asymptotic evolutionary ranking of strategy profiles that captures the inherent transient and recurrent nature of the underlying dynamics. Additionally, it can be computed efficiently (see Section \ref{subsec:computatial-costs} for more details) in general-sum games and holds information about the whole strategy space.

$\alpha$-Rank constructs the game graph of a normal-form game, which is similar to the response graph defined above. However, a directed edge exists from $s$ to $s^{\prime} \in \mathcal{S}$ if and only if $s$ and $s^{\prime}$ differ in a single player $i$'s strategy, i.e., $s^{\prime} = (s^{\prime}_i, s_{-i})$, regardless of changes in the player's utility.
Subsequently, $\alpha$-Rank uses the finite population replicator dynamics model to infer a random walk on the game graph. Player $i$'s strategy is represented by a finite population where each individual can be one of $i$'s feasible strategies $\mathcal{S}_i$. In a joint strategy profile $s$, we assume each player's population to be homogeneously $s_i$. The transition probability $C_{s, s^{\prime}}$ along an edge from $s$ to $s^{\prime}$ describes the probability that a mutant in the population of the deviating player $i$'s strategy overtakes the population. Therefore, the mutant's strength and probability of success are determined by the players' utilities and the selection intensity parameter $\alpha$.
The latter is determined numerically, as we describe below. More specifically, the transition probability for an existing edge from $s$ to $s^{\prime}$ with deviating player $i$ is given by:
\begin{align}\label{equ:alpha-rank-transition-prob-deviating-player}
	C_{s, s^{\prime}}(\alpha, u) =
	\begin{cases}
		\eta \frac{1 - e^{- \alpha (u_i(s^{\prime}_i, s_{-i}) - u_i(s_i, s_{-i}))}}{1 - e^{- m\alpha (u_i(s^{\prime}_i, s_{-i}) - u_i(s_i, s_{-i}))}}, &\text{ if } u_i(s^{\prime}_i, s_{-i}) \neq u_i(s_i, s_{-i}) \\
		\frac{\eta}{m}, &\text{ otherwise.}
	\end{cases}
\end{align}
Here, $m \in \mathbb{N}$ denotes the population size, $\alpha > 0$ the selection intensity, and $\eta = \left(\sum_k |\mathcal{S}_k | - 1 \right)^{-1}$. If there is no edge between $s, s^{\prime} \in \mathcal{S}$, then the transition probability $C_{s, s^{\prime}}$ equals zero. The probability of a self-transition for a given strategy profile $s$ is then given by
\begin{align} \label{equ:alpha-rank-transition-prob-self-transition}
	C_{s, s}(\alpha, u) = 1 - \sum_{k \in \mathcal{N}, s^{\prime} = (s_{-i}, s_k), s_k \in \mathcal{S}_k \setminus {s_i}} C_{s, s^{\prime}}(\alpha, u).
\end{align}

For a higher value of $\alpha$, it becomes less likely that the population representing the current strategy gets invaded by a mutant representing a strategy resulting in a strictly smaller utility. Considering the limit $\alpha \rightarrow \infty$, non-improving strategies will never invade existing ones. For more details on the transition model, we refer to \citet{omidshafieiARankMultiAgentEvaluation2019}.
The resulting Markov-chain following the random walk is irreducible and aperiodic, which means that a unique stationary distribution, the so-called $\alpha$-Rank distribution, exists. The probabilities represent not the frequencies of the strategies being played but the probabilities for each strategy to invade other existing strategies. That means it measures the average time spent by the evolutionary process in each strategy. For a given $\alpha$ and utility functions $u$, we denote the mapping from a given interaction to the $\alpha$-Rank distribution by $f(\alpha, u) = \pi_{\alpha, u}$.

Increasing the selection intensity $\alpha$ connects the $\alpha$-Rank distribution to the solution concept of MCCs. More specifically, strategy profiles with a non-zero mass of the limiting distribution $\lim_{\alpha \rightarrow \infty} f(\alpha, u) = \pi_{u}^*$, that constitute a connected component of the game graph, coincide with the MCCs of the finite population replicator dynamics model.

Due to the connection of $\alpha$-Rank and MCCs, in practice, one tries to find the largest $\alpha$ such that the Markov chain is still connected (considering the finite precision of computer systems) and a stationary distribution $f(\alpha, u)= \pi_{\alpha, u}$ exists. We denote this maximum $\alpha$ by $\alpha_{u}^{\text{pre}}$. 
See Algorithm \ref{alg:steps-to-estimate-single-integrand} for details.

\begin{algorithm}
	\caption{Approximate an $\alpha$-Rank distribution for a single instance}\label{alg:steps-to-estimate-single-integrand}
	\begin{algorithmic}
		\Inputs{Set of agents $\mathcal{N}$, strategy set $\mathcal{S}$,\\
			drawn types $v \in \mathcal{V}$ with joint utility function $u(\cdot, v)$.
		}
		\Initialize{$\alpha \gets 10^{-5}$ (sufficiently small so that $f(\alpha, u(\cdot, v)$ exists) \\
			Determine transition matrix $C(\alpha, u(\cdot, v))$ by Equations \ref{equ:alpha-rank-transition-prob-deviating-player} and \ref{equ:alpha-rank-transition-prob-self-transition} \\
			Calculate stationary distribution $f(\alpha, u(\cdot, v)$ for $C(\alpha, u(\cdot, v))$}\\
		\Procedure{}{}
		\While{$f(\alpha, u(\cdot, v)$ exists}
		\State Increase: $\alpha \gets 2 \cdot \alpha$
		\State Update: $C(\alpha, u(\cdot, v)) \gets C(\alpha, u(\cdot, v))$, $f\left(\alpha, u(\cdot, v)\right) \gets f\left(\alpha, u(\cdot, v)\right)$
		\EndWhile
		\State \Return $\alpha$ as $\alpha_{u(\cdot, v)}^{\text{pre}}$ and $f(\alpha_{u(\cdot, v)}^{\text{pre}}, u(\cdot, v)) \approx g(u(\cdot, v))$.
		\EndProcedure
	\end{algorithmic}
\end{algorithm}

\section{\texorpdfstring{$\alpha$}{alpha}-Rank-Collections}
We present $\alpha$-Rank-Collections, a novel solution concept to capture uncertainty in the players' strategic behavior that arises from uncertainty in the players' preferences. We proceed with showing some desirable properties of $\alpha$-Rank-Collections, and finally discuss its computational costs.

\subsection{Capturing Uncertainty in Strategic Behavior}
We want to analyze agent strategies in games considering the probabilistic information available about vNM utilities of others. A Bayesian game provides a possibility to describe these uncertainties as probabilities over types \citep{harsanyiGamesIncompleteInformation1967}.

\begin{definition}\label{def:bayesian-game}
	A Bayesian game is specified by a tuple $(\mathcal{N}, \mathcal{S}, \mathcal{V}, F, u)$ where
	\begin{itemize}
		\item $\mathcal{N} = \{1, \dots N\}$ is a set of players
		\item $\mathcal{S} = \mathcal{S}_1 \times \mathcal{S}_2 \times \cdots \times \mathcal{S}_N$ is the set of strategy profiles, where $\mathcal{S}_i$ is a finite set of strategies available to player $i \in \mathcal{N}$
		\item $\mathcal{V}_i$ is of set of types for player $i$ with $\mathcal{V} = \mathcal{V}_1 \times \cdots \times \mathcal{V}_N$
		\item $F$ is a joint distribution over types
		\item $u_i: \mathcal{S} \times \mathcal{V} \rightarrow \mathbb{R}$ is player $i$'s utility function and $u = (u_1, \dots, u_N)$
	\end{itemize}
	If the utility function $u_i$ only depends on the types $\mathcal{V}_i$ of agent $i$ but not on the other players' types $\mathcal{V}_{-i}$ for all $i$, then the game is said to have private values. 
\end{definition}

The set of player $i$'s types $\mathcal{V}_i$ allows a parametrization of vNM functions over the interaction's outcomes $\mathcal{O}_i = \{o_{i, 1}, \dots, o_{i, K_i}\}$. This can be achieved by choosing $\mathcal{V}_i \subset \mathbb{R}^{K_i}$ and set a corresponding vNM function $U_i(o_{i, j}, v_i) = v_{i, j}$ for each $o_{i, j} \in \mathcal{O}_i$ and $v_i \in \mathcal{V}_i$. The prior distribution specifies the likelihood of a vNM function implied by $v_i$ to occur. In this way, we say that the possible set of vNM functions is fully specified, enabling the modeling of arbitrary preferences over outcomes. We present a scenario that uses this modeling in Section \ref{subsec:matching-experiments}.

In a Bayesian game's interim state, the players know their type (i.e., vNM function) but not those of the others. 
Although we use the same formalization, we interpret the prior distributions as uncertainty about participants' specific vNM function.
Given a draw from the prior, we obtain a normal-form game where strategic behavior can be predicted. The distribution over vNM functions dictates the likelihood of each normal-form game occurring, and predictions must be weighted according to the ex-ante uncertainty to provide an overall prediction. Thus, Bayesian games enable us to reason about strategies by considering a collection of possible normal-form games.

Given the problems with Bayes-Nash equilibria, such as their computational hardness, and multiplicity, we turn our attention to $\alpha$-Rank but extend the concept to Bayesian games via $\alpha$-Rank-collections. 



\begin{definition}[$\alpha$-Rank-collection] \label{def:alpha-rank-collection}
	Let $\text{BG} = (\mathcal{N}, \mathcal{S}, \mathcal{V}, F, u)$ be a Bayesian game.
	Let $g(u(\cdot, v)) = \pi_{u(\cdot, v)}^* \in \Delta \mathcal{S}$ be the limiting $\alpha$-Rank distribution for $u$ and $v \in \mathcal{V}$. Then we define $\text{BG}$'s $\alpha$-Rank-collection by
	\begin{align}
		\Lambda_{\text{BG}} = \mathbb{E}_{v \sim F} \left[g(u(\cdot, v)) \right].
	\end{align}
\end{definition}

For a single $v \in \mathcal{V}$, the limiting distribution $g(u(\cdot, v)) = \pi_{u(\cdot, v)}^*$ corresponds to a weighted ranking of the considered dynamics' MCCs (see Section \ref{subsec:alpha-rank}). That is, $\pi_{u(\cdot, v)}^*$ holds the information on how likely any given strategy profile is to encounter in the long run. The expectation over the prior $F$ weights this according to the likelihood that the players have $v \in \mathcal{V}$ as their types. 

We can view $\alpha$-Rank collections also in the following way. Imagine we have an oracle predictor $O: \mathcal{M} \rightarrow \Delta \mathcal{S}$ that maps from the space of utility matrices to the set of probability distributions over the joint strategy space $\mathcal{S}$. Furthermore, let $X: \Omega \rightarrow \mathcal{M}$ be a random variable that captures uncertainty in the players' utilities (preferences). Then the best predictor for the expected outcome is given by the expected value $\mathbb{E}\left[O(X) \right]$. In this sense, $\alpha$-Rank-collections are the best predictor over the players' types with $\alpha$-Rank as oracle. 

\subsection{Stability and Smoothness}

Let us now prove two useful properties of $\alpha$-Rank-collections. 
First, a vNM function for a given preference relation $\succeq^{\text{vNM}}$ is not unique, however, it is up to positive affine transformations. More specifically, two vNM functions $U, \tilde{U}$ represent the same preference relation if and only if there exist $a >0$ and $b \in \mathbb{R}$ such that $\tilde{U} = aU + b$. Due to this property, \citet{harsanyiGeneralTheoryEquilibrium1988} argue that any game-theoretic solution concept should be invariant to positive affine transformations of the players' utilities. The following theorem establishes this result for $\alpha$-Rank-collections if the utility functions are linear in the types. 

\begin{theorem} \label{thm:alpha-rank-collection-invariant-to-positive-affine-transformations}
	Let $\text{BG} = (\mathcal{N}, \mathcal{S}, \mathcal{V}, F, u)$ be a Bayesian game. Furthermore, let every utility function $u_i$ be linear in the types, then $\alpha$-Rank-collections are invariant to positive affine transformations of the type space $\mathcal{V}$. If $\text{BG}$ additionally has private values, $\alpha$-Rank-collections are invariant to positive affine transformations in each individual type space $\mathcal{V}_i$. 
\end{theorem}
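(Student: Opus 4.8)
\emph{Proof sketch.} The plan is to reduce an affine transformation of the type space to an affine transformation of the realized normal-form games, show that the limiting $\alpha$-Rank distribution $g$ is unchanged for every realization $v$, and then conclude that the expectation defining $\Lambda_{\text{BG}}$ is unaffected.

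First, I would make the reduction precise. Let $T(v) = a v + b$ with $a > 0$ be a positive affine transformation of $\mathcal{V}$ (of the kind that acts on each player's vNM representation, i.e.\ $b$ is constant within each player's block of outcome values). Since each $u_i(s,\cdot)$ is linear in the types and, through the vNM parametrization $u_i(s,v) = \mathbb{E}_{o_i \sim L_i(s)}[U_i(o_i, v_i)] = \sum_j L_i(s)_j\, v_{i,j}$, is a convex combination of the coordinates of $v_i$ with $\sum_j L_i(s)_j = 1$, I would show that the transformed game has utilities $\tilde u_i(s,v) := u_i(s, Tv) = a\, u_i(s,v) + b_i$, where the additive term $b_i$ is a constant independent of the strategy profile $s$. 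For the private-values case the same computation is carried out player by player: a transformation $T_i(v_i) = a_i v_i + b_i$ applied to $\mathcal{V}_i$ alone gives $\tilde u_i(s,v) = a_i u_i(s,v) + b_i$ and leaves the other players' utilities untouched.

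Next, I would argue that $g(\tilde u(\cdot, v)) = g(u(\cdot, v))$ for every $v$. The transition probabilities in Equations~(\ref{equ:alpha-rank-transition-prob-deviating-player})--(\ref{equ:alpha-rank-transition-prob-self-transition}) depend on the utilities only through the deviation gaps $\Delta_i(s, s') = u_i(s'_i, s_{-i}) - u_i(s_i, s_{-i})$, so the constants $b_i$ cancel and only the rescaling matters. When $a$ is a single positive scalar, substituting $\tilde\Delta_i = a\Delta_i$ into~(\ref{equ:alpha-rank-transition-prob-deviating-player}) absorbs $a$ into $\alpha$, yielding the exact identity $C(\alpha, \tilde u) = C(a\alpha, u)$; hence $f(\alpha, \tilde u) = f(a\alpha, u)$ for all $\alpha > 0$, and passing to the limit along the reparametrized sequence gives $g(\tilde u(\cdot,v)) = \lim_{\alpha \to \infty} f(a\alpha, u(\cdot,v)) = g(u(\cdot,v))$. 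Since this equality holds pointwise in $v$, I would finish by taking expectations and using the change of variables $w = Tv$ (equivalently, the pushforward prior $T_\# F$) to obtain $\Lambda_{\widetilde{\text{BG}}} = \mathbb{E}_{w \sim T_\# F}[g(\tilde u(\cdot, w))] = \mathbb{E}_{v \sim F}[g(\tilde u(\cdot, Tv))] = \mathbb{E}_{v \sim F}[g(u(\cdot, v))] = \Lambda_{\text{BG}}$.

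The main obstacle is the step that controls the $\alpha \to \infty$ limit, because the map sending a row-stochastic transition matrix to its stationary distribution is discontinuous precisely where the limiting chain becomes reducible, so one cannot naively pass to the limit inside $C(\alpha, \cdot)$. For a common positive scalar this is avoided entirely by the exact reparametrization $C(\alpha, \tilde u) = C(a\alpha, u)$, so no interchange of limits occurs. The private-values statement is the genuinely harder case: there the rescaling is player-dependent, so $C(\alpha, \tilde u)$ is the $\alpha$-Rank matrix of $u$ with player-specific effective selection intensities $a_i\alpha$, and one cannot rewrite it as a single rescaling of $\alpha$; establishing the claim then requires a finer analysis of how the leading-order escape behavior out of each sink strongly connected component of the response graph transforms, which is where I would concentrate the technical effort.
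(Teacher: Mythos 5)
For the first statement your argument coincides with the paper's own proof in Appendix~\ref{sec:proof-invariance-positive-affine-transformations}: linearity (via the convex-combination/lottery structure of $u_i(s,\cdot)$) turns the affine map on types into $u_i(s,\overline v)=a\,u_i(s,v)+\text{const}$, the additive constants cancel in the deviation gaps, and the common positive factor $a$ is absorbed into the selection intensity, giving the exact identity $C_{s,s'}(\alpha,u(\cdot,\overline v))=C_{s,s'}(a\alpha,u(\cdot,v))$, hence $f(\alpha,u(\cdot,\overline v))=f(a\alpha,u(\cdot,v))$, and the $\alpha\to\infty$ limit is merely reparametrized. Taking the expectation over the prior (your pushforward bookkeeping is a slightly more careful version of the paper's redefinition of $\overline{\mathcal V}$) completes that part; nothing to add.

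On the private-values clause you have put your finger on exactly the point where the paper is thinnest. The paper transforms one player's type at a time and states only that ``repeating the steps above'' yields $g(u(\cdot,v))=g(u(\cdot,(v_{-i},\overline v_i)))$; the implicit justification is that every entry of the transition matrix has the same $\alpha\to\infty$ limit, since edges where player $i$ deviates are now evaluated at intensity $a_i\alpha\to\infty$ while all other edges are literally unchanged. As you note, this is strictly weaker than the exact reparametrization of part one: the transformed chain has player-dependent effective intensities and is not $C(\beta,u)$ for any single $\beta$, and because the entrywise limit matrix is in general reducible, entrywise convergence does not by itself force the stationary distributions to converge to the same limit --- the relative masses on distinct sink components are governed by the \emph{rates} at which the negative-gap edges vanish, and player-specific rescaling changes those rates relative to one another. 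The paper does not supply the finer leading-order (spanning-tree / Freidlin--Wentzell-type) analysis you anticipate, so your proposal is incomplete on the second statement only in the sense that the paper's own proof is: the step you decline to take is the step the paper asserts without argument. What does transfer immediately is that positive player-specific rescaling preserves the sign of every deviation gap, hence the response graph and its sink strongly connected components, so the \emph{support} of the limiting distribution is invariant; pinning down the masses is where any rigorous completion would have to concentrate.
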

We refer to Appendix~\ref{sec:proof-invariance-positive-affine-transformations} for the proof.
Note that the linearity assumption in the players' types does not hold in general. However, by applying a full specification of the players' vNM functions via the types, as described in Section~\ref{subsec:ordinal-and-cardinal-utilities}, this is automatically the case. For example, one can represent any instance of an ordinal game. Furthermore, it means the choice of vNM functions' representatives does not play a role for $\alpha$-Rank-collections. 

In practice, we do not have access to the limiting $\alpha$-Rank distribution $g(u(\cdot, v))$. Instead, we approximate it by $f(\alpha_{u(\cdot, v)}^{\text{pre}}, u(\cdot, v))$, i.e., by finding the largest $\alpha$ such that one can compute a stationary distribution algorithmically (see Section~\ref{subsec:alpha-rank}). That means we approximate the $\alpha$-Rank-collection by 
\begin{align} \label{equ:approx-alpha-rank-collection}
	\tilde{\Lambda}_{\text{BG}} = \mathbb{E}_{v \sim F} \left[f(\alpha_{u(\cdot, v)}^{\text{pre}}, u(\cdot, v)) \right],
\end{align}
where $\alpha_{u(\cdot, v)}^{\text{pre}}$ needs to be approximated for the specific types $v$. 
We show next that the mapping on $\alpha$-Rank distributions given by $f(\alpha, u(\cdot, v))$ is continuously differentiable in both arguments if the utilities are continuously differentiable in the types. For our purposes in modeling uncertainty in vNM functions, it means that the solution is not too sensitive to changes in the vNM functions, which is a desirable property because for a large set $\mathcal{V}$ one relies on approximation methods. The approximation technique's error bounds often rely on the integrand's variance, which is easier to bound for continuously differentiable functions. For example, variance reduction methods, such as quasi-Monte-Carlo integration, only have a meaningful bound for continuous integrands \citep{caflischMonteCarloQuasiMonte1998}.
Even though the limiting $\alpha$-Rank distribution $g(u(\cdot, v))$ is not necessarily even continuous, each evaluation of $f(\alpha, u(\cdot, v))$ gives an indication for a small neighborhood. The experiments in Section~\ref{subsec:matching-experiments} show a smooth transition of mass among the strategy profiles with changing utilities.

\begin{theorem}\label{thm:continuity-of-alpha-rank-in-vNM-and-alpha}
	Let $\mathcal{N}$ be a set of agents, $\mathcal{S}$ a strategy space, and $\mathcal{V}_i \subset \mathbb{R}^{m_i}$ for $i \in \mathcal{N}$ and $\mathcal{V} = \mathcal{V}_1 \times \cdots \times \mathcal{V}_N$. Furthermore, let the function $u_{i}: \mathcal{S} \times \mathcal{V} \rightarrow \mathbb{R}$ be continuously differentiable in its second argument for every $i \in \mathcal{N}$. The tuple $\left(\mathcal{N}, \mathcal{S}, u(\cdot, v) \right)$ constitutes a normal-form game for every $v \in \mathcal{V}$. Then, the mapping on the corresponding $\alpha$-Rank distribution $f(\alpha, u(\cdot, v)) = \pi_{\alpha, u(\cdot, v)}$ is continuously differentiable in $\alpha \in (0, \infty)$ and $v \in \mathcal{V}$.
\end{theorem}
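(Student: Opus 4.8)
The plan is to write $f(\alpha,u(\cdot,v))$ as a composition of two maps: the \emph{parametrization map} $P\colon(\alpha,v)\mapsto C(\alpha,u(\cdot,v))$, which sends a parameter pair to the $|\mathcal S|\times|\mathcal S|$ transition matrix built from Equations~\eqref{equ:alpha-rank-transition-prob-deviating-player}--\eqref{equ:alpha-rank-transition-prob-self-transition}, and the \emph{stationary map} $Q$, which sends an irreducible row-stochastic matrix to its unique stationary distribution. I would show that $P$ is $C^1$ on $(0,\infty)\times\mathcal V$, that $Q$ is $C^\infty$ on the open set of matrices for which the relevant Cramer system is non-degenerate, and that $P$ takes values in that set; the chain rule then yields $f(\alpha,u(\cdot,v))=Q(P(\alpha,v))\in C^1$. (If $\mathcal V$ is not open, differentiability is understood via the canonical extension of each $u_i$ to a neighborhood in $\mathbb R^{\sum_{i\in\mathcal N} m_i}$, which exists because $u_i$ is $C^1$ in its second argument.)

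For the smoothness of $P$: every off-diagonal entry of $C$ along an edge with deviating player $i$ equals $\eta\, g\!\left(\alpha\,\Delta^i_{s,s'}(v)\right)$, where $\Delta^i_{s,s'}(v)=u_i(s_i',s_{-i},v)-u_i(s_i,s_{-i},v)$ and $g(x)=\tfrac{1-e^{-x}}{1-e^{-mx}}$, and where the second line of~\eqref{equ:alpha-rank-transition-prob-deviating-player} is precisely the continuous extension $g(0)=1/m$. The first step is the lemma that $g\in C^\infty(\mathbb R)$ (indeed real-analytic): writing $1-e^{-y}=y\,\varphi(y)$ with $\varphi$ entire, $\varphi(0)=1$, and $\varphi$ having no real zeros, one gets $g(x)=\varphi(x)/\bigl(m\,\varphi(mx)\bigr)$, a quotient of analytic functions with nowhere-vanishing real denominator; in particular the case split in~\eqref{equ:alpha-rank-transition-prob-deviating-player} introduces no non-smoothness. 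Since $u_i(\cdot,v)$ is $C^1$ in $v$ and $\mathcal S$ is finite, each $\Delta^i_{s,s'}$ is $C^1$, hence $(\alpha,v)\mapsto\alpha\,\Delta^i_{s,s'}(v)$ is $C^1$ and so is its composition with $g$; the diagonal entries~\eqref{equ:alpha-rank-transition-prob-self-transition} are finite affine combinations of the off-diagonal ones, hence $C^1$ as well. Continuity of the partials $\partial_\alpha$ and $\partial_{v_k}$ of each entry is immediate from the chain rule together with continuity of $\Delta^i_{s,s'}$, $\nabla_v\Delta^i_{s,s'}$ and $g'$.

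For $Q$ together with the range condition: I would first check that $P(\alpha,v)$ is always irreducible (hence has a unique stationary distribution). An elementary case analysis ($x>0$, $x<0$, $x=0$) gives $0<g(x)<1$ for all real $x$; therefore every game-graph edge carries strictly positive probability $\eta g>0$ and every self-loop probability $C_{s,s}$ is strictly positive, and since the game graph is strongly connected (any profile is reached from any other by changing one player's strategy at a time), the chain is irreducible for \emph{every} $(\alpha,v)$. For such a matrix $C$, $\pi$ is the unique solution of the system obtained from $(I-C)^{\top}\pi=0$ by replacing one equation with $\mathbf 1^{\top}\pi=1$; writing $A(C)$ for its coefficient matrix (an affine function of the entries of $C$), irreducibility forces $\det A(C)\neq 0$ (the rank of $I-C$ is $|\mathcal S|-1$ by Perron--Frobenius), so Cramer's rule expresses $\pi=A(C)^{-1}e_j$ as a ratio of polynomials in the entries of $C$ with non-vanishing denominator on the open set $\{\det A(\cdot)\neq 0\}$, which contains every irreducible stochastic matrix. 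Hence $Q$ is $C^\infty$ on that set, and composing with $P$ finishes the argument.

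I expect the two load-bearing points to be: (i) showing that formula~\eqref{equ:alpha-rank-transition-prob-deviating-player}, despite its case split, defines a genuinely smooth function through the removable singularity at $\Delta=0$ (handled cleanly by the factorization $1-e^{-y}=y\,\varphi(y)$); and (ii) verifying that irreducibility is \emph{never} lost as $(\alpha,v)$ ranges over all of $(0,\infty)\times\mathcal V$, so that the rational representation of $\pi$ stays bounded away from its poles and the chain rule applies on the whole domain rather than only locally. Everything else is routine bookkeeping with the chain rule and the finiteness of $\mathcal S$.
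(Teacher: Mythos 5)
Your proof is correct, and although it follows the same overall decomposition as the paper's argument --- first smoothness of $(\alpha,v)\mapsto C(\alpha,u(\cdot,v))$, then smoothness of the map sending an irreducible stochastic matrix to its stationary distribution --- both halves are carried out by genuinely different means. For the first half, the paper computes the partial derivatives $\partial_\alpha C_{s,s'}$ and $\partial_{v_i} C_{s,s'}$ explicitly away from the locus $\Delta^i_{s,s'}=0$ and then invokes a separate auxiliary lemma showing that the resulting expression, built from $h(x)=(m-1+e^{mx}-me^{x})/(e^{mx}-1)^2$, extends continuously through $x=0$; your factorization $1-e^{-y}=y\,\varphi(y)$ with $\varphi$ entire, $\varphi(0)=1$ and $\varphi$ nonvanishing on $\mathbb{R}$ shows at once that the entry function $g(x)=(1-e^{-x})/(1-e^{-mx})$ is real-analytic with $g(0)=1/m$, which disposes of the case split more cleanly and yields $C^\infty$ rather than merely $C^1$ dependence of the entries. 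For the second half, the paper goes through the Perron--Frobenius theorem for primitive matrices together with an eigenvector-perturbation theorem of Lax (simple eigenvalues and their eigenvectors depend continuously differentiably on the matrix), followed by a normalization step; you instead characterize $\pi$ as the solution of the nonsingular linear system obtained by replacing one equation of $(I-C)^{\top}\pi=0$ with $\mathbf 1^{\top}\pi=1$ and apply Cramer's rule, which is more elementary, exhibits $\pi$ as a rational (hence $C^\infty$) function of the entries of $C$, and requires only irreducibility (geometric simplicity of the eigenvalue $1$) rather than primitivity. You also prove irreducibility directly from $0<g<1$ and strong connectivity of the game graph, where the paper simply cites Theorem 2.2.1 of the $\alpha$-Rank paper, and your remark on extending the $u_i$ to a neighborhood when $\mathcal V$ is not open addresses a point the paper leaves implicit. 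The only detail worth spelling out is that strict positivity of the self-loops $C_{s,s}$ rests on $\eta\sum_k(|\mathcal S_k|-1)\le 1$ combined with $g<1$; note, though, that your Cramer argument does not actually need aperiodicity, only irreducibility.
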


We refer to Appendix~\ref{sec:proof-of-continuity-of-alpha-rank} for the proof. Note that the condition of the utility functions being continuously differentiable in the types holds again for a full specification of the vNM functions as described in Section~\ref{subsec:ordinal-and-cardinal-utilities}.

\subsection{Approximation and Computational Costs} \label{subsec:computatial-costs}

To determine $\tilde{\Lambda}_{\text{BG}}$, one needs to use some appropriate way to estimate the expectation over $\alpha$-Rank distributions. For a finite set of types, this can be done explicitly, as the average of finitely many $\alpha$-Rank distributions weighted according to the prior. Otherwise, Monte-Carlo approximation \citep{hammersleyMonteCarloMethods1964} constitutes an easy alternative for continuous type spaces with potentially more complex priors. Furthermore, due to Theorem~\ref{thm:continuity-of-alpha-rank-in-vNM-and-alpha}, we know that Monte-Carlo approximation works with reasonable bounds. See Algorithm~\ref{alg:monte-carlo-approximation-alpha-rank-collection} for a high-level procedure applying Monte-Carlo approximation to estimate an $\alpha$-Rank-collection.

For both methods, one needs to evaluate the limiting $\alpha$-Rank distribution for a given type, which demands to determine $\alpha_{u(\cdot, v)}^{\text{pre}}$ for a given $v \in \mathcal{V}$ in a first step. However, often the types $\mathcal{V}$ are relatively close together, so that it suffices to determine only a few different $\alpha_{u(\cdot, v)}^{\text{pre}}$ for certain areas in $\mathcal{V}$. Consequently, this all boils down to the computational cost of determining a single $\alpha$-Rank distribution. 

$\alpha$-Rank is efficient in the sense that it finds the stationary distribution for a given game dynamics in polynomial time, i.e., it is pseudo-polynomial. Table~\ref{tab:eval-runtime-alpha-rank} reports the runtime for random instances of different sizes. The final row corresponds to the size of our empirical experiments presented in Section~\ref{subsec:matching-experiments}, with joint strategy space of size $7776$ it takes about $6.5$ minutes to calculate a single $\alpha$-Rank distribution. However, note that this can be parallelized easily so that this can be distributed on multiple machines.
\begin{table}
	\centering
	\caption{Estimated runtime of calculating an $\alpha$-rank distribution for a setting with $5$ agents and growing action spaces.}
	\label{tab:eval-runtime-alpha-rank}
	\begin{tabular}{lrrr}
		\toprule
		$|\mathcal{A}_i|$ &  Num strategies $|\mathcal{S}|$ & Runtime $[s]$ \\
		\midrule
		2 &                    32 &  0.03 \\
		3 &                   243 &  0.59 \\
		4 &                  1024 & 5.30  \\
		5 &                  3125 &  42.79 \\
		6 &                  7776 & 397.22   \\
		\bottomrule
	\end{tabular}
\end{table}
Nonetheless, the game graph itself grows exponentially with the number of agents and strategies, meaning that the originally proposed computation of $\alpha$-Rank is only feasible for relatively small settings. Furthermore, \citet{yangAlphaAlphaRank2020} conjecture that determining the top-rank joint strategy profile in $\alpha$-Rank is NP-hard. Given that our method necessitates computing multiple $\alpha$-Rank distributions, this potential computational complexity would also apply to ascertaining precise $\alpha$-Rank-Collections. However, since $\alpha$-Rank-Collections account for the entire $\alpha$-Rank distributions rather than solely the top-rank and represent an expectation over the prior distribution, our approach still offers valuable insights with approximate results. \citet{yangAlphaAlphaRank2020} introduce an algorithm to efficiently approximate the $\alpha$-Rank distribution using a double oracle mechanism, thereby eliminating the need to store the game graph's transition matrix in memory. Exploring larger settings and the computational complexity of approximation methods remains a promising direction for future research, albeit orthogonal to the conceptual introduction of $\alpha$-Rank-Collections.

\begin{algorithm}
	\caption{Monte-Carlo Approximation for $\alpha$-Rank-collections}\label{alg:monte-carlo-approximation-alpha-rank-collection}
	\begin{algorithmic}
		\Inputs{
			Bayesian game $\text{BG} = (\mathcal{N}, \mathcal{S}, \mathcal{V}, F, u)$\\
			Sampling number $n_{\text{sample}}$
		}
		\Initialize{
			$\alpha$-Rank distribution list $Q$
		}
		\Procedure{}{}
		\While{$\text{iter} < n_{\text{sample}}$}
		\State Sample $v \sim F$.
		\State Determine $\alpha_{u(\cdot, v)}^{\text{pre}}$ and $f(\alpha_{u(\cdot, v)}^{\text{pre}}, u(\cdot, v))$, e.g. via a sweep over values for $\alpha$
		\State Append $f(\alpha_{u(\cdot, v)}^{\text{pre}}, u(\cdot, v)) \approx g(u(\cdot, v))$ to $Q$.
		\State $\text{iter} \gets \text{iter} + 1$
		\EndWhile
		\State \Return $\text{MEAN}(Q) = \tilde{\Lambda}_{\text{BG}} \approx \Lambda_{\text{BG}}$.
		\EndProcedure
	\end{algorithmic}
\end{algorithm}

\section{Experimental Evaluation} \label{sec:empirical-evaluations}

This section illustrates $\alpha$-Rank-collections with two examples. The first is the previously introduced general Hawk-Dove game. Subsequently, we present a more involved experiment of a two-sided matching market. 

\subsection{The General Hawk-Dove Interaction}

In our Hawk-Dove game, the players' types represent the vNM functions over the outcomes of receiving the good $V_i$, receiving nothing $N_i$, or paying the cost of war $C_i$. We assume the players either to have a Prisoner's Dilemma [PD] type (see Table~\ref{tab:hawk-dove-prisoners-dilemma-instance}) or a Anti-Coordination [AC] type (see Table~\ref{tab:hawk-dove-battle-of-sexes-instance}). The prior $F$ specifies an agent to be a PD type with probability $p$ and an AC type with $1-p$. This can lead to four different types of interactions, namely [PD, PD], [PD, AC], [AC, PD], [AC, AC]. We conduct a sweep over increasing values of $\alpha$ for each instance to select the maximum $\alpha$ such that the $\alpha$-Rank distribution still exists. The sweeps' results are depicted in Figure~\ref{fig:alpha-sweep-hawk-dove-game}.

\begin{figure*}
	\centering
	\subfigure[$\text{[PD, PD]}$ interaction (Table \ref{tab:hawk-dove-prisoners-dilemma-instance})]{\label{fig:alpha-sweep-pd}
		\includegraphics[width=0.3\textwidth]{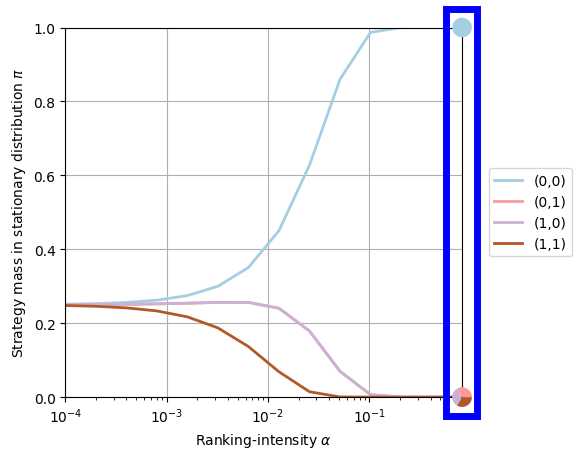}}
	\hfill
	\subfigure[$\text{[AC, AC]}$ interaction (Table \ref{tab:hawk-dove-battle-of-sexes-instance})]{\label{fig:alpha-sweep-bos}
		\includegraphics[width=0.3\textwidth]{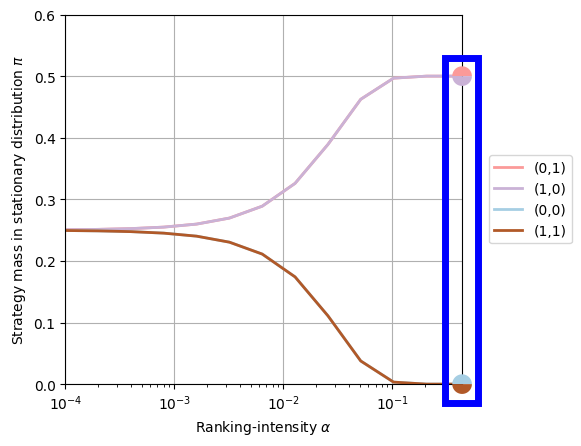}}
	\hfill
	\subfigure[$\text{[PD, AC]}$ interaction (Table \ref{tab:hawk-dove-mixed-pd-bos-instance})]{\label{fig:alpha-sweep-pd-bos}
		\includegraphics[width=0.3\textwidth]{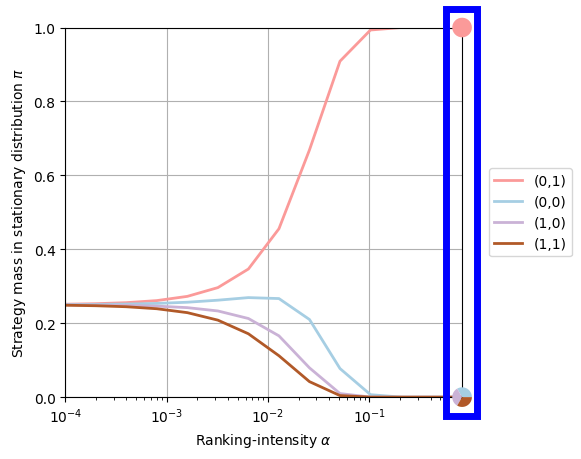}}
	\caption{Values for stationary distributions for different values of $\alpha$ over the strategy space. Action "0" corresponds to "Hawk" and action "1" to "Dove".}
	\label{fig:alpha-sweep-hawk-dove-game}
\end{figure*}

In the [PD, PD] case, the whole mass lies on (Hawk, Hawk), the unique NE. The converged stationary distribution for the Anti-Coordination instance puts equal mass on the strategies (Hawk, Dove) and (Dove, Hawk). Again, both are NE and weighted equally likely to occur in the long run. The mixed instances result in putting all mass on either the (Hawk, Dove)-strategy for the [PD, AC] case or on the (Dove, Hawk)-strategy for the [AC, PD] case. The expected value gives the $\alpha$-Rank-collection over the different $\alpha$-Rank distributions. 

\begin{align}\label{equ:alpha-rank-collection-general-hawk-dove-game}
	\Lambda_{\text{H-D}}(p) &= p^2 \cdot \pi_{\text{PD, PD}}^* + p(1-p) \cdot \pi_{\text{PD, AC}}^* \nonumber \\
	&+ (1-p)p \cdot \pi_{\text{AC, PD}}^* + (1-p)^2 \cdot \pi_{\text{AC, AC}}^* \nonumber \\
	&=
	\left(
	\begin{array}{c}
		\Lambda_{\text{H-D}}(p)[\text{Hawk, Hawk}]\\
		\Lambda_{\text{H-D}}(p)[\text{Hawk, Dove}]\\
		\Lambda_{\text{H-D}}(p)[\text{Dove, Hawk}]\\
		\Lambda_{\text{H-D}}(p)[\text{Dove, Dove}]
	\end{array}
	\right) =
	\left(
	\begin{array}{c}
		p^2\\
		\frac{1-p^2}{2}\\
		\frac{1-p^2}{2}\\
		0
	\end{array}
	\right)
\end{align}

The $\alpha$-Rank-collection $\Lambda_{\text{H-D}}(p)$ for a given parameter $p$ tells us the following. The strategy profile $(Dove, Dove)$ has an expected evolutionary strength of zero, which means that we expect this strategy profile to vanish. The strategy profiles $(Hawk, Dove)$ and $(Dove, Hawk)$ have the expected evolutionary strength of $\frac{1- p^2}{2}$, which is strictly above zero if $p < 1$. Finally, $(Hawk, Hawk)$ has an expected evolutionary strength of $p^2$. This result shows that even if all players almost certainly have a Prisoner's Dilemma type vNM function, for example, with $p=0.98$, one still expects the strategy profiles $(Hawk, Dove)$ and $(Dove, Hawk)$ to occur, as their combined expected evolutionary strength is close to $4$\%. Also, for $p< \sqrt{2}/2$, the evolutionary strength of $(Hawk, Hawk)$ is smaller than the combined of $(Hawk, Dove)$ and $(Dove, Hawk)$ and we expect $(Hawk, Hawk)$ to appear less frequently than the other two strategy profiles. 

\begin{figure}[ht]
	\centering
	\includegraphics[width=.6\textwidth]{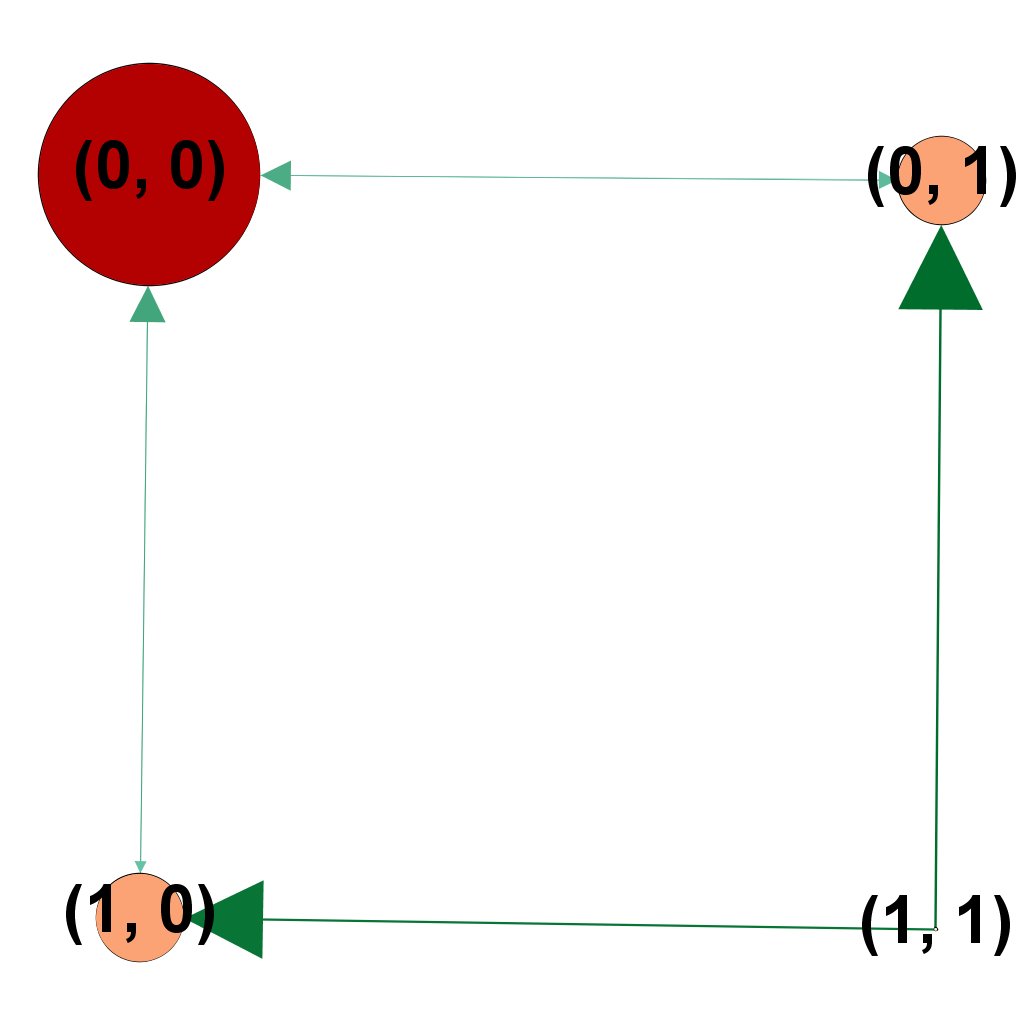}
	\caption{Game graph of the Hawk-Dove game graph with $p=0.75$. The nodes' sizes and colors represent the $\alpha$-Rank-collection's mass. Nodes with high mass are large and red, whereas low mass nodes are brighter and small. The edges' size represents the improvement in the deviating agent's utility weighted by the prior distribution over the four possible instances. Action "0" corresponds to "Hawk" and action "1" to "Dove".}
	\label{fig:graph-hawk-dove-collection-p-075}
\end{figure}

Figure~\ref{fig:graph-hawk-dove-collection-p-075} shows the averaged game graph\footnote{We use the software Gephi \citep{Bastian_Heymann_Jacomy_2009} for all graph visualizations.} for $p=0.75$, where the nodes' sizes and colors represent the $\alpha$-Rank-collection. One can see that (Hawk, Hawk) dominates the distribution. However, the edges pointing towards (Dove, Hawk) and (Hawk, Dove) show small expected improvements, so these strategies still appear in the long run.
Note that for $p \geq \frac{1}{2}$ the Bayesian Nash equilibrium of this game would be the pure strategy with playing Hawk as PD type and Dove as AC type. On the other hand, an equilibrium with distributional strategies exists only for the case of $p < \frac{1}{2}$, together with two pure strategy equilibria. This again highlights the NE's difference to $\alpha$-Rank-collections. Instead of having best-responses to strategies conditioned on the individual types that constitute a BNE, an $\alpha$-Rank-collection captures variances in strategic play resulting from uncertainties in the players' vNM functions.

\subsection{Two-Sided Matching Market} \label{subsec:matching-experiments}
The following section analyzes the Boston mechanism as an example for the game-theoretical analysis of an ordinal game. 

\subsubsection{Experimental Setting.} \label{subsubsec:experimental_setting}
We build upon the game setting introduced for the experimental setup considered by \citet{featherstone2016boston} in their laboratory study.
The \textit{aligned preference environment} is an instance of the school choice problem\footnote{Also known as the student-courses or hospital-residents problem.}. It consists of five students (3 Tops, 2 Averages) and three schools (Gold, Silver, Bronze). The Gold school has two seats, while the Silver and Bronze schools only have one seat, such that one student will remain unmatched. We denote the set of possible outcomes by $\mathcal{O}_i = \{G, S, B, U\}$, where $G$, $S$, $B$ stand for the Gold, Silver, and Bronze school respectively, and $U$ for being unmatched. All schools prefer the three Tops to the two Averages. Note that the students among these groups are ranked equally, and a lottery decides who gets a seat among these groups. For example, if the two Averages apply for the Silver school, a lottery decides who gets the seat. The schools are assumed to be not strategic, i.e., they always submit their true preferences. All students prefer the Gold to Silver to the Bronze school. All of these choices are preferred to being unmatched.
Therefore, we require students to submit a full preference ordering over the schools, i.e., $\mathcal{S}_i = \{(G, S, B), (G, B, S), (S, G, B), (S, B, G), (B, G, S), (B, S, G) \}$ for each student $i$. This environment is asymmetric because Tops and Averages are treated differently by the schools.

We study the environment under two different matching mechanisms. 
The first is the well-known student-optimal deferred acceptance\footnote{This is a variant of the Gale-Shapley matching mechanism.} (DA) mechanism \citep{galeCollegeAdmissionsStability1962}. The mechanism works as follows. First, the students apply to their most preferred school. The schools reject the lowest ranking students who exceed their capacity limit. Not rejected students are held \emph{temporarily}. If a student is rejected, she applies to the next school in the preference order. The schools take temporarily admitted and newly applying students into account and reject the lowest ranking students that exceed the capacity. This process repeats until no more students apply or are rejected. The DA mechanism is strategy-proof, i.e., truth-telling is a dominant strategy. 

The second mechanism is the Boston school matching mechanism \citep{abdulkadirogluBostonPublicSchool2005}. The mechanism works similarly to the DA mechanism. However, seats of admitted students are fixed in each round. All seats at a school may already be taken even if the applying student is ranked higher than any already submitted student. For example, all three Tops apply to the Gold school in the first round, whereas the Averages apply to the Silver and Bronze school. One Top remains unmatched and applies to her second choice. As all available seats are already taken under Boston, the Top remains unmatched, whereas she would get another seat under DA. 

We construct a normal-form game from this environment by choosing vNM functions for the students and calculate their expected utilities for all possible strategy profiles. We denote the cardinal values for each possible outcome by $v_i = (v_{G, i}, v_{S, i}, v_{B, i}, v_{U, i})$, where $v_i$ denotes player $i$'s type. For example, if all three Tops play truthfully under DA with resulting strategy profile $s^{\text{tr}}$, the lottery over the outcomes for a Top $i$ is given by $L_{i, \text{DA}}(s^{\text{tr}}) = \left(\frac{2}{3}, \frac{1}{3}, 0 , 0\right)$ with expected utility $u_{i, \text{DA}}(s^{\text{tr}}, v_i)) = \frac{2}{3} v_{G, i} + \frac{1}{3} v_{S, i}$. 

\subsubsection{Equilibrium Strategies.} \label{subsubsec:equilibrium-strategies-aligned-env}

The equilibrium strategies depend on the chosen mechanism. Under Boston, they are also dependent on the specific vNM functions, which is not the case under DA. 
The DA mechanism is strategy-proof, which means that truth-telling is a NE. However, the set of equilibrium strategies is given by all strategy profiles where the Tops play truthfully, and the Averages play an arbitrary strategy. These $36$ strategies denoted by $\text{NE}_{\text{DA}}$ are even an ordinal NE (ONE), i.e., they are a NE for every vNM function satisfying the ordinal preference relation (see Proposition 2.1 of \citep{featherstone2016boston}). We expect each of these strategy profiles to hold a significant amount of mass under the $\alpha$-Rank distribution, independent of the choice of vNM functions.

The Boston algorithm is not truthful, in fact, it is even easily manipulable \citep{pathakSchoolAdmissionsReform2011}.
\citet{featherstone2016boston} considered the case with students' vNM functions satisfying $\frac{2}{3}v_{G, i} + \frac{1}{3}v_{B, i} > v_{S, i}$, for every student $i$. In this case, the set of equilibrium strategies is given by the four strategies where Tops rank Gold first and Silver second, whereas the Averages rank Silver first (see Proposition 2.2 of \citep{featherstone2016boston}). We denote this set of equilibria by $\text{NE}_{\text{Bo}}$.

If there is at least one Top $j$ with a vNM function such that $\frac{2}{3}v_{G, j} + \frac{1}{3}v_{B, j} < v_{S, j}$, then none of the strategies from $\text{NE}_{\text{Bo}}$ is still an equilibrium strategy. That is due to this Top having a beneficial deviation by ranking Silver first. For the case $\frac{2}{3}v_{G, j} + \frac{1}{3}v_{B, j} = v_{S, j}$, the strategies $\text{NE}_{\text{Bo}}$ remain NE, however, these are no longer strict and do not constitute the whole set of equilibria. That means, NE under Boston depend on the players' types.

\subsubsection{Results.}
We start by analyzing the $\alpha$-Rank distributions for specific types of players. Afterward, we approximate an $\alpha$-Rank-collection for a Bayesian game formulation. First, we fix the vNM values for Gold and Bronze to $100$ and $25$, respectively, and vary the vNM value for Silver for all students. Figure~\ref{fig:mass_over_utility_landscapes} shows the mass put on the set of equilibrium strategies under DA $(\text{NE}_{\text{DA}})$ and Boston $(\text{NE}_{\text{Bo}})$. One can see that all mass remains on the set $\text{NE}_{\text{DA}}$ under DA. Furthermore, we observed that the mass is equally distributed among all $36$ different strategies. The behavior under Boston shows significant differences. For a vNM value of $70$ for Silver, about $97$\% of mass is concentrated on the strategies $\text{NE}_{\text{Bo}}$. For an increasing vNM value for Silver, the mass on the strategies $\text{NE}_{\text{Bo}}$ slightly decreases at the beginning and drops sharply around $75$. For values higher than $75$, no mass remains on the previous NE strategy profiles.

\begin{figure}[ht]
	\centering
	\includegraphics[width=.6\textwidth]{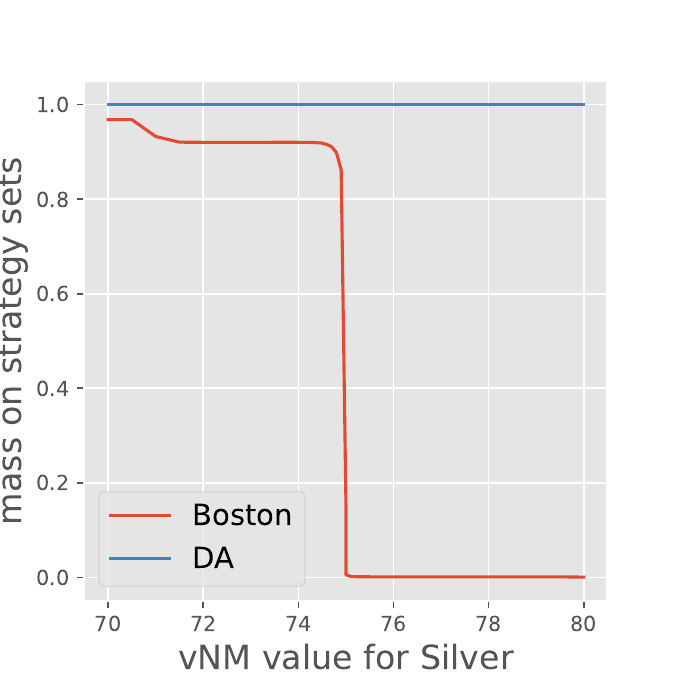}
	\caption{The $\alpha$-Rank distribution's mass on the strategies profiles $\text{NE}_{\text{DA}}$ under DA and $\text{NE}_{\text{Bo}}$ under Boston for different values of the vNM function for Silver. The values of Gold and Bronze are fixed to $100$ and $25$ respectively.}
	\label{fig:mass_over_utility_landscapes}
\end{figure}

Figure~\ref{fig:graph-vizs-boston} presents three different instances of game graphs where the nodes' masses are the $\alpha$-Rank distributions. Note that the set of strategy profiles $\mathcal{S}$ consists of $6$ strategies for each of the $5$ participants and therefore has $6^5=7776$ entries. As a result, the game graph is extensive. 
The graphs depict the strategy profiles as nodes, and a directed edge exists if a deviation of a single agent does not decrease its utility. The nodes' sizes and colors represent each strategy profiles mass under the $\alpha$-Rank distribution, representing the individual profiles' long-term probability of occurring.
The graphs show the masses' shift from the four strategies $\text{NE}_{\text{Bo}}$ towards a greater variety. Whereas visually, all mass is on the four NE strategies $\text{NE}_{\text{Bo}}$ in Figure~\ref{fig:graph-viz-standard-case-boston} with $v_S=70$, other strategies emerge on the right in Figure~\ref{fig:graph-viz-indifferent-case-boston} with $v_S=75$. In Figure~\ref{fig:graph-viz-strictly-case-boston}, with $v_S=80$, the $\alpha$-Rank distribution's mass shifted various strategies.

\begin{figure*}[ht]
	\centering
	\subfigure[$v_S=70$]{\label{fig:graph-viz-standard-case-boston}\includegraphics[width=0.3\textwidth]{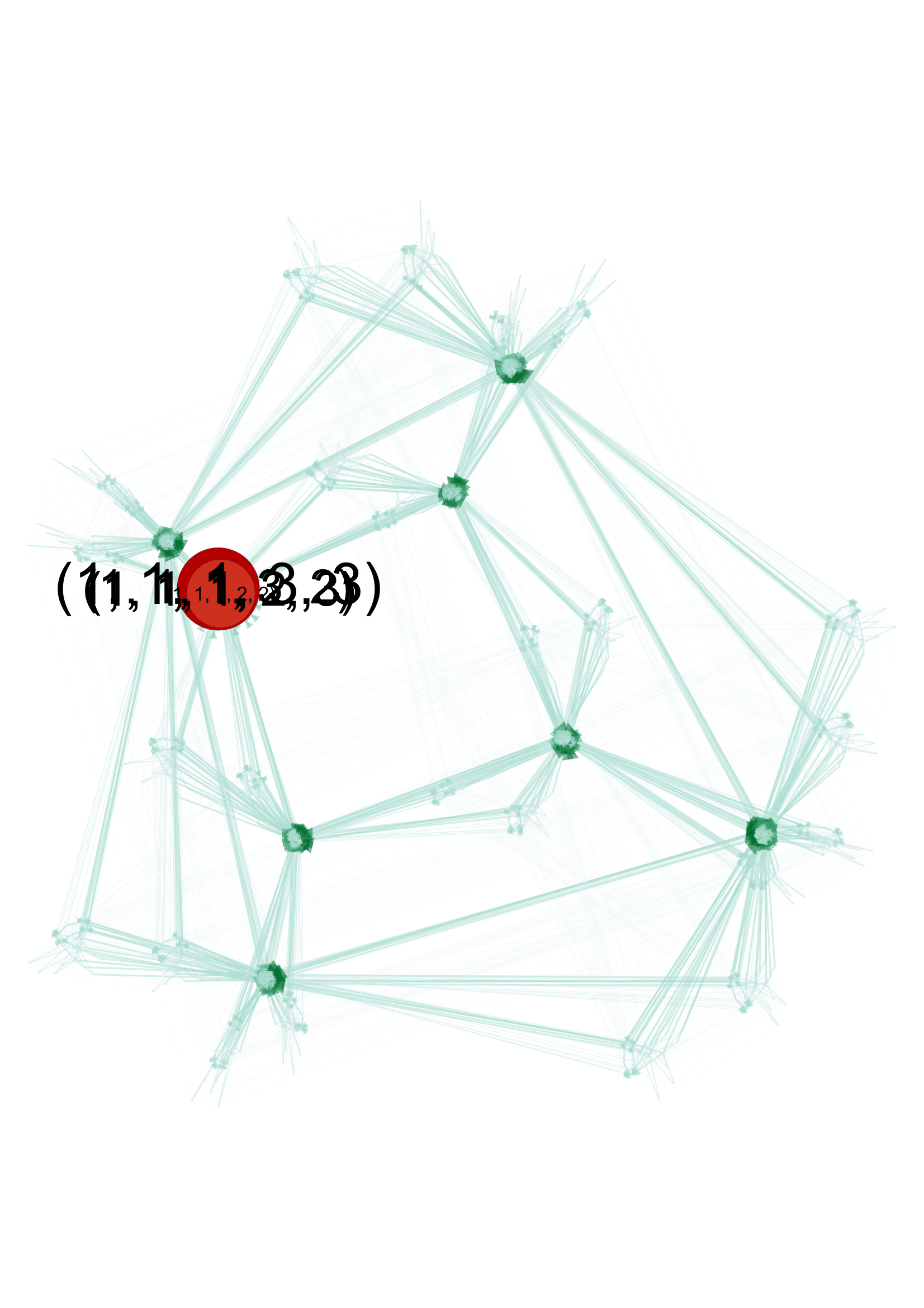}}
	\hfill
	\subfigure[$v_S=75$]{\label{fig:graph-viz-indifferent-case-boston}\includegraphics[width=0.3\textwidth]{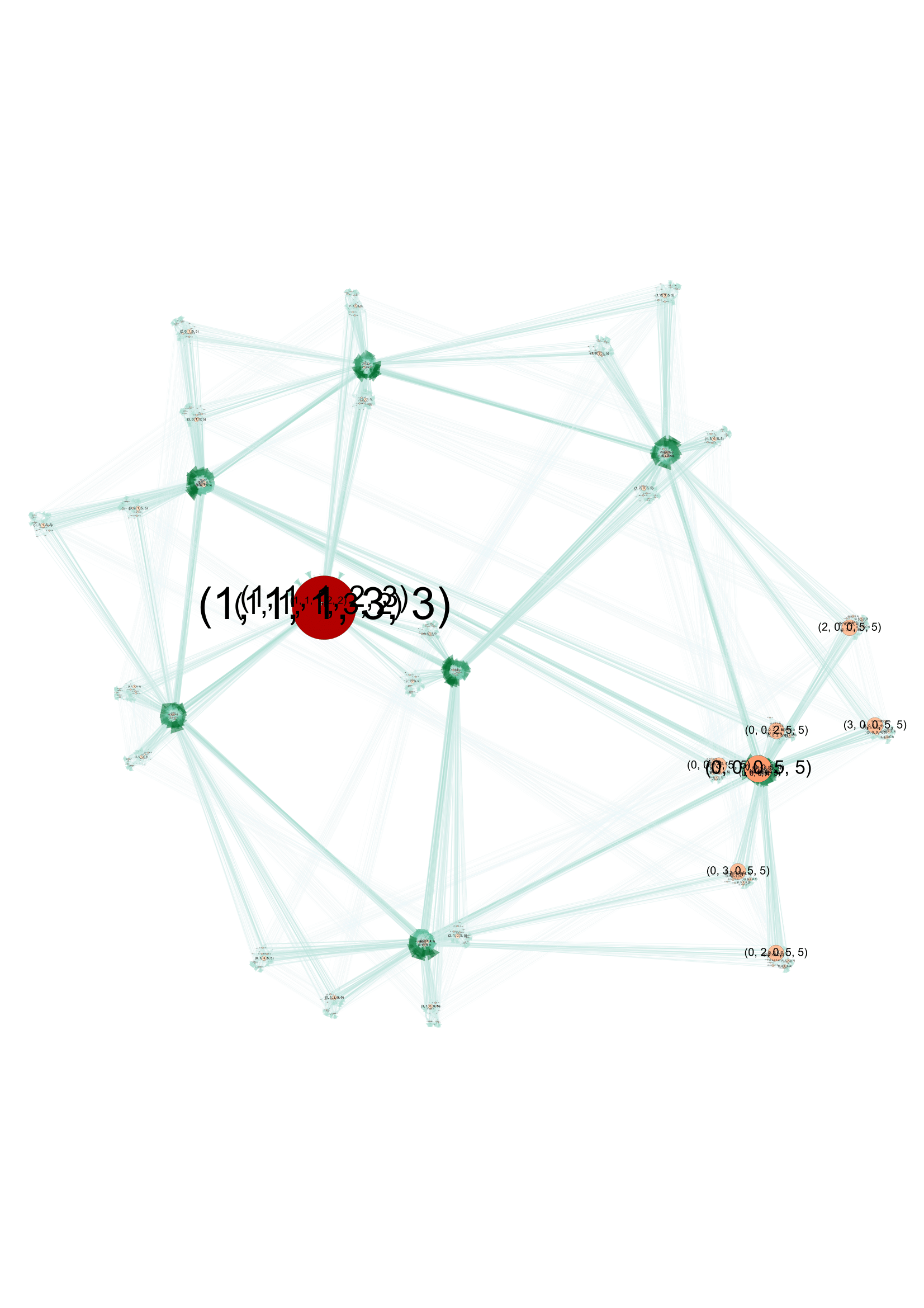}}
	\hfill
	\subfigure[$v_S=80$]{\label{fig:graph-viz-strictly-case-boston}\includegraphics[width=0.3\textwidth]{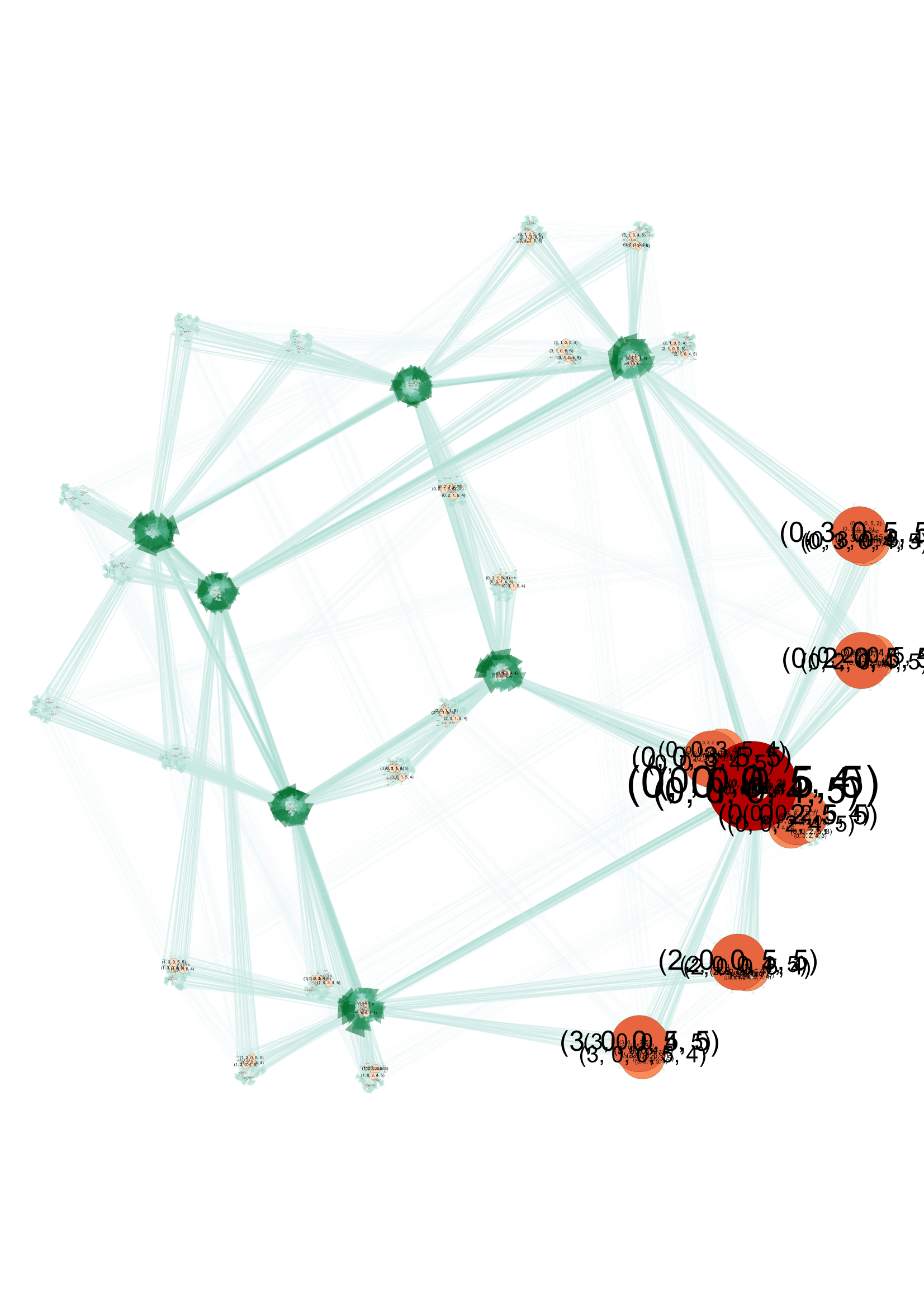}}
	\caption{Graph visualizations with a linear-repulsion linear-attraction model of the aligned environment's game graph under Boston with different varying values for $v_S$, and fixed values $v_G=100$, $v_B=25$ for all students. The nodes' sizes and colors represent the $\alpha$-Rank distribution's mass. The edges are green, and their brightness and size depict the improvement in the deviating player's utility.}
	\label{fig:graph-vizs-boston}
\end{figure*}

Finally, we analyze the aligned environment under Boston as a Bayesian game, which we denote by $\text{AE}$, using $\alpha$-Rank collections for some given prior distribution. 
Suppose that there was an effort to estimate the players' vNM functions with a conclusion that the Gold school is ranked $v_{G, i} \approx 100$, Silver with $v_{S, i} \approx 70$, and Bronze $v_{B, i} \approx 25$ for every $i \in \mathcal{N}$.
If the estimation is fairly precise, one can expect that the corresponding $\alpha$-Rank-collection is close to the results depicted in Figure~\ref{fig:graph-viz-standard-case-boston}. With increasing uncertainty, more weight may fall into strategically different cases, such as those shown in Figures~\ref{fig:graph-viz-indifferent-case-boston} and \ref{fig:graph-viz-strictly-case-boston}. 
We assume that the uncertainty about these values is captured by normal distributions $\mathcal{N}(\mu, \sigma)$, where we assume that larger values have a higher standard deviation $\sigma$. More specifically, the prior distribution $F$ is given such that $v_{G, i} \sim \mathcal{N}(100, 6)$, $v_{S, i} \sim \mathcal{N}(70, 3)$, and $v_{B, i} \sim \mathcal{N}(25, 2)$. The value for being unmatched is fixed to be $v_{U, i} = 0$. There is a probability that a drawn vNM function does not satisfy the ordinal preference relation. If this is the case, the types are resampled. The strength of the $\alpha$-Rank-collection concept is that it captures the variance in strategic play that emerge from uncertain utilities. A high uncertainty likely results in a wider variety of relevant strategies, whereas a low uncertainty likely allows a more precise prediction.

We approximate the $\alpha$-Rank-collection $\Lambda_{\text{AE}}$ (see Equation~\ref{equ:approx-alpha-rank-collection}) via Monte-Carlo integration. We sample $1000$ different types, calculate the corresponding $\alpha$-Rank distributions and average the results. We chose the $\alpha$ in the following way. We determined the largest possible $\alpha$ so that the Markov chain is still connected for twenty different vNM functions around the distributions' mean values. In all cases, we found $\alpha_{u(\cdot, v)}^{\text{pre}} = 6.71$. We use this value for all simulations unless the Markov chain is not connected. In that case, we keep decreasing $\alpha$ by $0.1$ until a unique stationary distribution can be calculated. The smallest used $\alpha$ in the simulations is $5.81$.  

The approximated $\alpha$-Rank-collection $\tilde{\Lambda}_{\text{AE}}$ yields the following results.
Table~\ref{tab:strat-dist-alpha-rank-collection-matching} shows the average mass on strategies per student group. The mass on Tops playing truthfully is about $20.5$\%, on their equilibrium strategy its close to $70$\%, but roughly $10$\% on Silver first. The mass on Averages playing one of their equilibrium strategies is also about $70$\%, whereas Bronze first is $28$\% and Gold first about $1$\%. However, when considering the joint strategies played, only about $60$\% of the expected strategy strength lies on the four strategies in $\text{NE}_{\text{Bo}}$, even though it would be close to $100$\% for the prior's mean values.

\begin{table}
	\centering
	\caption{Average strategy distribution among student groups according to $\tilde{\Lambda}_{\text{AE}}$.}
	\label{tab:strat-dist-alpha-rank-collection-matching}
	\begin{tabular}{rcccccc}
		\toprule
		Strategy & (G, S, B) & (G, B, S) & (S, G, B) & (S, B, G) & (B, G, S) & (B, S, G) \\
		\midrule
		Top     & 0.205  & 0.698  & 0.048  & 0.048 & 0.000  & 0.000  \\
		Average & 0.007     & 0.004     & 0.274     & 0.438     & 0.105     & 0.172   \\ 
		\bottomrule
	\end{tabular}
\end{table}

The expected outcome distribution weighted by the mass given by $\tilde{\Lambda}_{\text{AE}}$ for a Top and Average is shown in Table~\ref{tab:outcome-dist-alpha-rank-collection-matching}. 
A Top gets a Gold seat about $\left( \frac{2}{3} \right)$ of the times, meeting the equilibrium prediction for any strategy in $\text{NE}_{\text{Bo}}$. However, she also expects to receive Silver about $11.2$\% and Unmatched $1.6$\% of the time. In total, the mass put on strategies wherein the expected outcomes one Top remains unmatched is $4.7$\%, one Average gets the Silver seat is $66.5$\%, and one Average gets the Bronze seat is $38.2$\%. Each of these outcomes describes a somewhat surprising scenario given the strategic interaction. For example, one might suspect that no Top would remain unmatched due to her being higher ranked by the schools than the Averages.

\begin{table}
	\centering
	\caption{Expected outcome distribution for students by group according to $\tilde{\Lambda}_{\text{AE}}$. The equilibrium prediction for the mean values follow in parentheses.}
	\label{tab:outcome-dist-alpha-rank-collection-matching}
	\begin{tabular}{rcccc}
		\toprule
		{} &   Gold &  Silver &  Bronze &  Unmatched \\
		\midrule
		Top     &  0.667 $\left(\frac{2}{3}\right)$ &   0.112 $\left(0\right) $&   0.206 $\left(\frac{1}{3}\right)$ &      0.016 $\left(0\right) $ \\
		Average &  0.000 $\left(0\right) $ &   0.333 $\left(\frac{1}{2}\right)$ &   0.191 $\left(0\right) $ &      0.476 $\left(\frac{1}{2}\right)$ \\
		\bottomrule
	\end{tabular}
\end{table}

In summary, the analysis of the matching market shows that the $\alpha$-Rank-collection provides a more nuanced strategic analysis than equilibrium predictions for a specific vNM function, evaluating the strength for range of strategy profiles. Furthermore, while $\alpha$-Rank is sensitive to specific vNM functions, $\alpha$-Rank-collections take their inherent uncertainty into account. Finally, note that the analytical determination of the BNE for this example is highly non-trivial due to the complex prior distributions and might change drastically with a slightly different prior, rendering the previous results obsolete. In contrast, $\alpha$-Rank-collections can be easily adapted to modeling changes, and previously calculated $\alpha$-Rank distributions can be reused in follow-up calculations. Overall, $\alpha$-Rank-collections provide a principled approach to analyzing strategies in uncertain situations in players' vNM functions, where equilibrium solution concepts fall short.

\section{Conclusion and Future Work}
In many decision problems, specific vNM functions are not available; not even the players know their vNM function with certainty. This can be particularly challenging in matching markets where non-strategyproof mechanisms are used, as the intensity of preferences over outcomes can greatly impact strategic play. However, until now, there has been a lack of formal approaches to analyzing these types of games strategically.
We use Bayesian games to model uncertainty in players' vNM functions via a prior distribution. This achieves two things. First, the modeling approach offers the option to enrich ordinal, or generally unspecific, preferences by uncertain information about the strength or intensity of preferences that agents typically have over alternatives. Second, it shifts the focus of analysis towards determining the expected strategic behavior for a collection of normal-form games, where players know their own type and those of all others in the interim stage, and the game-theoretic prediction needs to consider strategic play given the ex-ante uncertainty.

Previous approaches that aim to solve Bayesian games, such as the Bayesian Nash equilibrium, rationalize strategies by considering the whole opponents' prior distribution. Each player knows her type in the interim stage and rationalizes over other players' types and strategies. In turn, they are collapsing strategic play into a fixed point in the ex-ante strategy space, which suffers from high sensitivity to modeling assumptions and intractability.
Instead, we propose analyzing players' behavior for each game instance separately, weighting it according to the prior distribution to obtain an expected strategic play that captures uncertainty in behavior across different games. We use $\alpha$-Rank to capture players' long-term behavior in a single normal-form game, providing a unique characterization of strategies that emerge from finite population replicator dynamics that represents how likely agents are in a particular strategy profile. Consequently, we introduce $\alpha$-Rank-collections, a solution concept that combines the agents' distributional information about the intensity of preferences with the expressive prowess of $\alpha$-Rank. Importantly, it satisfies several desirable properties. For example, we show that it is invariant to positive affine transformations so that it is independent of the choice of vNM functions' representatives and is efficient to approximate. Furthermore, we demonstrate in two examples that it provides a more nuanced analysis of strategic behavior than equilibrium predictions. In contrast to ordinal Nash equilibria, existence is guaranteed.

A limitation of $\alpha$-Rank is that the game graph grows exponentially in the number of players and strategies, limiting the application to relatively small games. Already our example of the Boston mechanism with five participants and six strategies led to a matrix with roughly $40 \ 000$ entries. Recent work is addressing this problem \citep{yangAlphaAlphaRank2020}, but scalability is also a topic for future research. We want to study larger game instances by applying more sophisticated methods than simple Monte-Carlo approximation. Besides, we want to analyze different types of matching markets and the predictive accuracy of $\alpha$-Rank-collections.

\bibliographystyle{unsrtnat}
\bibliography{references}  






\appendix

\section{Proof of Invariance of \texorpdfstring{$\alpha$}{alpha}-Rank-Collections to Positive Affine Transformations} \label{sec:proof-invariance-positive-affine-transformations}

This section provides the proof for Theorem~\ref{thm:alpha-rank-collection-invariant-to-positive-affine-transformations}.
\begin{proof}
	\ For the first statement, consider the Bayesian game $\text{BG} = (\mathcal{N}, \mathcal{S}, \mathcal{V}, F, u)$ and the values $a > 0$ and $b \in \mathbb{R}$. When referring to a scalar variable in boldface, we assume it to be a vector with equal values as the scalar of appropriate size, given the context, e.g., $\mathbf{b} = (b, b, \dots, b)^T$. Define the Bayesian game $\overline{\text{BG}} = (\mathcal{N}, \mathcal{S}, \overline{\mathcal{V}}, F, u)$ with $\overline{\mathcal{V}} = \{av + \mathbf{b} | v \in \mathcal{V}\}$. Then we need to show that $\Lambda_{\text{BG}} = \Lambda_{\overline{\text{BG}}}$. These $\alpha$-Rank collections are an expectation over the types for the limiting $\alpha$-Rank distribution. Therefore, it is sufficient to show that for any $v \in  \mathcal{V}$ with $\overline{v} = av + b$, it holds that $g(u(\cdot, v)) = g(u(\cdot, \overline{v}))$.
	
	Let $s, s^{\prime} \in \mathcal{S}$ be two strategy profiles with a single deviating player $i$. Then for $v \in \mathcal{V}$ with $u_i((s^{\prime}_i, s_{-i}), v) \neq u_i((s_i, s_{-i}), v)$ the transition probability from $s$ to $s^{\prime}$ satisfies
	\begin{align*}
		C_{s, s^{\prime}}(\alpha, u(\cdot, \overline{v})) 
		&= \eta \frac{1 - e^{- \alpha (u_i((s^{\prime}_i, s_{-i}), \overline{v}) - u_i((s_i, s_{-i}), \overline{v}))}}{1 - e^{- m\alpha (u_i((s^{\prime}_i, s_{-i}), \overline{v}) - u_i((s_i, s_{-i}), \overline{v}))}} \\
		&= \eta \frac{1 - e^{- \alpha (u_i((s^{\prime}_i, s_{-i}), a v + \mathbf{b}) - u_i((s_i, s_{-i}), a v + \mathbf{b}))}}{1 - e^{- m\alpha (u_i((s^{\prime}_i, s_{-i}), a v + \mathbf{b}) - u_i((s_i, s_{-i}), a v + \mathbf{b}))}} \\
		&= \eta \frac{1 - e^{- a \cdot \alpha (u_i((s^{\prime}_i, s_{-i}), v) - u_i((s_i, s_{-i}), v))}}{1 - e^{- a \cdot m\alpha (u_i((s^{\prime}_i, s_{-i}), v) - u_i((s_i, s_{-i}), v))}} \\
		&= C_{s, s^{\prime}}(a \cdot \alpha, u(\cdot, v))
	\end{align*}
	As $a > 0$, it holds that
	\begin{align*}
		\lim_{\alpha \rightarrow \infty} C_{s, s^{\prime}}(\alpha, u(\cdot, \overline{v}))  &= \lim_{\alpha \rightarrow \infty} C_{s, s^{\prime}}(a \cdot \alpha, u(\cdot, v)) \\
		&= \lim_{\alpha \rightarrow \infty} C_{s, s^{\prime}}(\alpha, u(\cdot, v)) .
	\end{align*}
	The transition probabilities of self-connections are given by Equation \ref{equ:alpha-rank-transition-prob-self-transition} and are a finite sum of non-self-connections. Consequently, the probability's limit of $\alpha \rightarrow \infty$ is again equal for $\overline{v}$ and $v$. All remaining transition probabilities are independent of the types. 
	Therefore, we get
	\begin{align*}
		g(u(\cdot, \overline{v})) &= \lim_{\alpha \rightarrow \infty} f( \alpha, u(\cdot, \overline{v})) = \lim_{\alpha \rightarrow \infty} f(\alpha, u(\cdot, a v + \mathbf{b})) \\
		&= \lim_{\alpha \rightarrow \infty} f(a \cdot \alpha, u(\cdot, v)) = \lim_{\alpha \rightarrow \infty} f(\alpha, u(\cdot, v)) = g(u(\cdot, v)),
	\end{align*}
	which shows the first statement.
	
	For the second part, we consider the setting with private values, i.e., $u_i(s, v) = u_i(s, v_i)$ for all $s \in \mathcal{S}$, $v \in \mathcal{V}$, and $i \in \mathcal{N}$. Let $a_i > 0$ and $b_i \in \mathbb{R}$ and denote $\overline{v}_i = a_i v_i + b_i$ for $v = (v_i, v_{-i})$ with $v_i \in \mathcal{V}_i$ and $v_{-i} \in \mathcal{V}_{-i}$. Then it holds that $u_i(s, a_i v + \mathbf{b}_i) = u_i(s, a_i v_i + \mathbf{b}_i)$. Repeating the steps above gives us $g(u(\cdot, v)) = g(u(\cdot, (v_{-i}, \overline{v}_i)))$. As this holds for every $i \in \mathcal{N}$, the second statement follows completing the proof. 
\end{proof}

\section{Proof of Continuity of $\alpha$-Rank} \label{sec:proof-of-continuity-of-alpha-rank}
In this section, we present the proof of Theorem \ref{thm:continuity-of-alpha-rank-in-vNM-and-alpha}. Before establishing this, we present some results from linear algebra as an intermediate step. We start by defining primitive matrices and their connection to irreducibility and aperiodicity. Consequently, we state the famous Perron-Frobenius theorem, which was first stated in this form by \citet{perron-frobenius-theorem-original-source}. The following definition and two theorems are taken from \citet{Seneta2006}.

\begin{definition}
	A square non-negative matrix $B$ is said to be primitive if there exists a positive integer $k$ such that $B^k > 0$. 
\end{definition}

\begin{theorem} \label{thm:irreducible-and-aperiodic-matrix-equals-primitive}
	A matrix $B$ irreducible and aperiodic if and only if it is primitive.
\end{theorem}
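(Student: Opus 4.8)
The plan is to pass to the directed graph $G(B)$ on the vertex set $\{1,\dots,n\}$ that has an arc $i\to j$ exactly when $B_{ij}>0$, and to read off every hypothesis and conclusion combinatorially: irreducibility of $B$ is strong connectedness of $G(B)$; the period of $B$ is the greatest common divisor of the lengths of all closed walks in $G(B)$, so aperiodicity says this period equals $1$; and primitivity, $B^k>0$, says that for one common exponent $k$ every ordered pair $(i,j)$ is joined by a walk of length exactly $k$, since $(B^k)_{ij}$ is a positively weighted count of such walks. I would then prove the two implications separately.

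For ``primitive $\Rightarrow$ irreducible and aperiodic'': if $B^k>0$, then every pair $(i,j)$ is joined by a walk of length $k$, so $G(B)$ is strongly connected and $B$ is irreducible. If $n=1$ the period is trivially $1$; if $n\ge 2$, strong connectedness forbids a zero row of $B$, whence $B^{k+1}=B\,B^k>0$ as well (each entry is a sum that picks up the strictly positive entry $(B^k)_{\ell j}$ against some strictly positive $B_{i\ell}$). Thus every vertex carries closed walks of lengths $k$ and $k+1$, so the period divides $\gcd(k,k+1)=1$ and $B$ is aperiodic.

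For the converse I would fix a vertex $v$ and let $S_v$ be the set of lengths of closed walks through $v$; concatenation of closed walks shows $S_v$ is closed under addition. The first ingredient is that $\gcd(S_v)=1$, which rests on two facts: (a) in a strongly connected graph $\gcd(S_v)$ is independent of $v$ --- given a closed walk at $u$, sandwich it between a walk $v\to u$ and a walk $u\to v$ to get two closed walks at $v$ whose lengths differ by the length of that walk at $u$, forcing $\gcd(S_v)$ to divide every element of $S_u$, and symmetrically, so this common value is precisely the period; and (b) the period is $1$ by hypothesis. The second ingredient is the elementary ``coin problem'' fact: a sub-semigroup of $\mathbb{N}$ with $\gcd$ equal to $1$ contains all sufficiently large integers, so there is $N_v$ with $(B^m)_{vv}>0$ for every $m\ge N_v$. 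Setting $N=\max_v N_v$ and letting $D\le n-1$ be the diameter of $G(B)$, put $K=N+D$: for any $i,j$ take a shortest walk from $i$ to $j$, of length $d\le D$, and precede it by a closed walk at $i$ of length $K-d\ge N$, obtaining a walk of length $K$ from $i$ to $j$; hence $(B^K)_{ij}>0$ for all $i,j$, i.e.\ $B^K>0$ and $B$ is primitive.

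The main obstacle is the converse, and within it the two named ingredients: they are short to state but require real (if standard) arguments, and they are where the aperiodicity hypothesis is actually consumed. The vertex-independence of $\gcd(S_v)$ is exactly the assertion that ``period measured at a single vertex'' is well defined; the coin-problem lemma I would obtain from B\'ezout's identity --- write $1$ as an integer combination of finitely many elements of $S_v$, then clear the negative coefficients by adding a large multiple of one generator, after which repeatedly adding that generator reaches every integer beyond a threshold. Alternatively, since the statement is attributed to \citet{Seneta2006}, one may simply cite it there; the sketch above shows it is self-contained modulo the coin-problem lemma.
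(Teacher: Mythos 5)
Your proof is correct, but there is nothing in the paper to compare it against: the paper does not prove this statement at all. It imports the definition of primitivity and this equivalence verbatim as known results from \citet{Seneta2006} and uses them as black boxes inside the proof of Lemma \ref{thm:continuity_stationary_distribution_in_policy}. What you have written is a self-contained, standard graph-theoretic proof of that imported fact. Both directions check out: in the forward direction, $B^k>0$ gives strong connectedness, the absence of zero rows gives $B^{k+1}=B\,B^k>0$, and closed walks of coprime lengths $k$ and $k+1$ force period $1$; in the converse, the vertex-independence of $\gcd(S_v)$, the numerical-semigroup (coin-problem) lemma, and the assembly $K=N+D$ with a closed walk of length $K-d$ prepended to a shortest $i\to j$ walk are all sound, and the coin-problem lemma is the only ingredient you defer, with a correct B\'ezout sketch. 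The only points worth tightening are degenerate edge cases that textbook statements usually exclude: the $1\times 1$ zero matrix (vacuously strongly connected under some conventions, yet not primitive and with an empty set of closed-walk lengths), and more generally you should note explicitly that in a strongly connected digraph with $n\ge 2$ every vertex lies on some closed walk, so that $S_v\neq\emptyset$ and its $\gcd$ is defined. What your approach buys over the paper's citation is transparency about where aperiodicity is consumed (precisely in the coin-problem step); what the citation buys is brevity, which is appropriate since the result is classical and peripheral to the paper's contribution.
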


The Perron-Frobenius theorem for primitive matrices is a follows.

\begin{theorem} \label{thm:perron-frobenius}
	Suppose $B\in \mathbb{R}^{n \times n}$ is a non-negative primitive matrix. Then there exists an eigenvalue $r$ such that:
	\begin{description}
		\item [\namedlabel{thm:perron-frobenius-real-larger-0}{(a)}] $r$ is a real value and strictly larger than $0$ 
		\item [\namedlabel{thm:perron-frobenius-positive-left-right-eigenvectors}{(b)}] with $r$ can be associated strictly positive left and right eigenvectors  
		\item [\namedlabel{thm:perron-frobenius-r-largest-eigenvalue}{(c)}] $r > \norm{\lambda}$ for any eigenvalue $\lambda \neq r$ 
		\item [\namedlabel{thm:perron-frobenius-eigenvectors-unique-up-to-scalar}{(d)}] the eigenvectors associated with $r$ are unique to constant multiples  
		\item [\namedlabel{thm:perron-frobenius-r-simple-root}{(e)}] $r$ is a simple root of the characteristic polynomial of $B$ 
	\end{description}
	One calls $r$ the Perron-Frobenius eigenvalue and its corresponding positive eigenvectors, the Perron-Frobenius eigenvectors.
\end{theorem}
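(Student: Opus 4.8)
The plan is to prove the five claims in sequence, leaning on two tools: the Brouwer fixed-point theorem for existence, and the defining property $B^k > 0$ of a primitive matrix to pass, whenever the argument becomes delicate, from $B$ to the strictly positive power $A = B^k$. Throughout I write $y$ and $x^\ast$ for the left and right eigenvectors I construct, and pair them against arbitrary eigenpairs to extract the needed inequalities.

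First I would establish existence of $r$ together with a nonnegative right eigenvector. Since $B^k > 0$, the matrix $B$ can have no zero row or column, so the continuous self-map $T(x) = Bx / \norm{Bx}_1$ of the probability simplex $\Delta = \{x \ge 0 : \sum_i x_i = 1\}$ is well defined (the denominator never vanishes, as $Bx$ is a positive combination of nonzero nonnegative columns). Brouwer's theorem yields a fixed point $x^\ast \in \Delta$ with $B x^\ast = r x^\ast$ and $r = \norm{B x^\ast}_1 > 0$, which gives claim (a). Multiplying by $B^k$ gives $r^k x^\ast = B^k x^\ast > 0$, so $x^\ast > 0$; applying the same construction to $B^T$ (also primitive, since $(B^T)^k = (B^k)^T > 0$) produces a strictly positive left eigenvector $y$ with $y^T B = r' y^T$. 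Pairing the two, $r' y^T x^\ast = y^T B x^\ast = r y^T x^\ast$ with $y^T x^\ast > 0$ forces $r' = r$, establishing (b).

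Next, to bound the other eigenvalues, take any eigenpair $Bz = \lambda z$ and apply the triangle inequality componentwise: $|\lambda|\,|z| = |Bz| \le B|z|$. Testing against the positive left eigenvector gives $|\lambda|\, y^T|z| \le y^T B |z| = r\, y^T|z|$, hence $|\lambda| \le r$ and $r$ is the spectral radius. The crux is the strict inequality in (c). Here I would invoke primitivity by passing to $A = B^k > 0$: the same $z$ satisfies $A z = \lambda^k z$, so $\lambda^k$ is an eigenvalue of the strictly positive matrix $A$ with $|\lambda^k| \le r^k$. For a strictly positive matrix the modulus-maximal eigenvalue is attained only by $r^k$ itself and has a one-dimensional eigenspace spanned by the positive Perron vector $x_A$ of $A$; this I would prove directly, by first deducing $A|z| = r^k |z|$ from the left-eigenvector pairing, and then showing that the equality $|Az| = A|z|$ with every entry of $A$ strictly positive forces all coordinates of $z$ to share a common complex phase, so $z$ is a scalar multiple of the real positive vector $x_A$. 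Consequently, if $|\lambda| = r$ then $z \propto x_A$, and $Bz = \lambda z$ with $z$ real positive and $B \ge 0$ forces $\lambda$ real and nonnegative, whence $\lambda = r$. This proves (c).

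The remaining claims follow from the same reduction. Uniqueness (d) holds because any right eigenvector of $B$ for $r$ is an eigenvector of $A$ for its modulus-maximal value $r^k$, whose eigenspace is one-dimensional. For simplicity (e) I would use the left eigenvector once more: if $r$ were a multiple root, geometric multiplicity one (from (d)) would force a generalized eigenvector $v$ with $(B - rI)v = x^\ast$; left-multiplying by $y^T$ and using $y^T(B - rI) = 0$ gives $0 = y^T x^\ast > 0$, a contradiction, so $r$ is a simple root of the characteristic polynomial. I expect the main obstacle to be the strict-dominance step (c): merely nonnegative, or even irreducible, matrices admit several eigenvalues of maximal modulus (the periodic case), so the argument must genuinely use $B^k > 0$, and the cleanest way to exploit it is the phase-alignment argument for the strictly positive power $A$, together with the fact that it pins $z$ to the one-dimensional real Perron direction.
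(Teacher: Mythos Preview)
The paper does not give its own proof of this theorem: it is quoted as a classical result taken from \citet{Seneta2006}, so there is no argument in the paper to compare your proposal against.

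That said, your plan is a correct and standard route to Perron--Frobenius for primitive matrices: Brouwer on the simplex for existence, the left-eigenvector pairing $y^T(\cdot)$ to obtain the weak bound $|\lambda| \le r$, passage to the strictly positive power $A = B^k$ for the phase-alignment step that upgrades this to strict dominance, and the Jordan-chain contradiction for algebraic simplicity. The one place I would tighten is the one-dimensionality of the $r^k$-eigenspace of $A$: your phase-alignment argument shows every eigenvector is a complex scalar times a \emph{positive} vector, but to conclude the eigenspace is one-dimensional you still need the short extra step that two positive eigenvectors $v,w$ of a strictly positive matrix are proportional (take $t = \min_i v_i/w_i$, note $v - tw \ge 0$ has a zero coordinate, and observe that $A(v-tw) = r^k(v-tw)$ with $A>0$ then forces $v - tw = 0$). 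With that in place your arguments for (d) and (e) go through cleanly.
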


The following theorem gives conditions so that eigenvectors of simple eigenvalues of a matrix-valued function are continuous in changes of the matrix's entries. We state an adapted version of Theorem 8 on page 130 by \citet{lax2007linear}.

\begin{theorem} \label{thm:continuity_eigenvalue_eigenvector_by_lax}
	Let $B(t)$ be a square matrix whose elements are continuously differentiable in $t \in \mathbb{R}^m$. Suppose that $r_0$ is an eigenvalue of $A(0)$ of multiplicity one, in the sense that $r_0$ is a simple root of the characteristic polynomial of $B(0)$. Then there exists a $\delta > 0$ such that for $\norm{t} \leq \delta$, there exists an eigenvalue $r(t)$ of $B(t)$ that depends continuously differentiable on $t$, with $r(0) = r_0$. Furthermore, we can choose an eigenvector $h(t)$ of $B(t)$ pertaining to the eigenvalue $r(t)$ to depend continuously differentiable on $t$.
\end{theorem}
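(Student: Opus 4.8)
The plan is to treat the two assertions---differentiability of the eigenvalue $r(t)$ and of an associated eigenvector $h(t)$---separately, reducing each to the Implicit Function Theorem together with elementary facts about the characteristic polynomial. The guiding observation is that simplicity of the root is exactly a nonvanishing-derivative condition, which is precisely what the Implicit Function Theorem needs.

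First I would handle the eigenvalue. Define $p(t,\lambda)=\det\!\big(B(t)-\lambda I\big)$. Since the determinant is a polynomial in the matrix entries and the entries of $B(t)$ are $C^1$ in $t$, the map $p$ is $C^1$ jointly in $(t,\lambda)$. The hypothesis that $r_0$ is a simple root of the characteristic polynomial of $B(0)$ is exactly $p(0,r_0)=0$ together with $\partial_\lambda p(0,r_0)\neq 0$: writing $p(0,\lambda)=(\lambda-r_0)\,q(\lambda)$ with $q(r_0)\neq 0$ gives $\partial_\lambda p(0,r_0)=q(r_0)\neq 0$. The Implicit Function Theorem applied to $p(t,\lambda)=0$ at $(0,r_0)$ then yields a $\delta>0$ and a $C^1$ function $r(t)$ on $\norm{t}\leq\delta$ with $r(0)=r_0$ and $p(t,r(t))=0$; thus $r(t)$ is an eigenvalue of $B(t)$ depending continuously differentiably on $t$. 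For the intended Perron-Frobenius application (where $r_0=r$ is the real simple eigenvalue of Theorem~\ref{thm:perron-frobenius}) all quantities are real and this runs directly over $\mathbb{R}$; in general one regards $\lambda$ as a complex variable and uses that $p$ is holomorphic in $\lambda$ with $\partial_\lambda p\neq 0$.

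Second, and this is where the work lies, I would construct the eigenvector. Set $M(t)=B(t)-r(t)I$. By continuity of $\partial_\lambda p(t,r(t))$ (shrinking $\delta$), $r(t)$ remains a simple root for all small $t$, so its geometric multiplicity is one and $\operatorname{rank} M(t)=n-1$, i.e.\ $\ker M(t)$ is one-dimensional. Since $\operatorname{rank} M(0)=n-1$, some $(n-1)\times(n-1)$ submatrix of $M(0)$ is invertible; I would fix indices $p,q$ such that deleting row $p$ and column $q$ leaves an invertible matrix $\tilde M(0)$, so that by continuity $\tilde M(t)$ stays invertible for $\norm{t}\leq\delta$. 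Normalising the candidate eigenvector by $h_q(t)=1$, the rows of $M(t)h(t)=0$ other than row $p$ become the square linear system $\tilde M(t)\,\hat h(t)=b(t)$, where $\hat h=(h_j)_{j\neq q}$ and $b(t)$ is minus the $q$-th column of $M(t)$ restricted to the rows $\neq p$. Solving gives $\hat h(t)=\tilde M(t)^{-1}b(t)$, which is $C^1$ in $t$ because matrix inversion is smooth on invertible matrices (Cramer's rule) and $b(t)$ is $C^1$; setting $h_q(t)=1$ defines a $C^1$, nowhere-zero candidate $h(t)$.

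The main obstacle is to confirm that this $h(t)$, which was forced to satisfy only the $n-1$ equations from the rows $\neq p$, in fact satisfies the remaining $p$-th equation and hence lies in $\ker M(t)$. Here I would exploit the one-dimensionality of the kernel: any nonzero $v\in\ker M(t)$ must have $v_q\neq 0$, for if $v_q=0$ then the rows $\neq p$ yield $\tilde M(t)\hat v=0$, whence $\hat v=0$ and $v=0$, a contradiction. Normalising such a $v$ by $v_q=1$ makes it satisfy the same reduced system $\tilde M(t)\hat v=b(t)$, so invertibility of $\tilde M(t)$ forces $\hat v=\hat h(t)$ and therefore $v=h(t)$. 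Thus the constructed $h(t)$ coincides with a genuine kernel vector, is a true eigenvector of $B(t)$ for $r(t)$, and is $C^1$, which completes the proof. I expect the only delicate points to be the persistence of simplicity (guaranteeing $\operatorname{rank} M(t)=n-1$ throughout) and this kernel-uniqueness argument identifying the reduced solution with an actual eigenvector.
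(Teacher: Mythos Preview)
Your proof is correct and follows the standard route via the Implicit Function Theorem, but you should be aware that the paper does not actually prove this theorem. It is stated as a citation: the authors quote it as an adapted version of Theorem~8 (p.~130) in Lax's \emph{Linear Algebra}, remark that Lax treats the case $m=1$, and then simply assert that the extension to $t\in\mathbb{R}^m$ goes through unchanged once one assumes continuous differentiability of $B(t)$. So there is no ``paper's own proof'' to compare against; the result is used as a black box.

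That said, your argument is essentially the classical one (and is close in spirit to what Lax does): IFT on the characteristic polynomial for the eigenvalue, then a rank argument and Cramer-type inversion on an $(n-1)\times(n-1)$ subsystem for the eigenvector, with the kernel-uniqueness check to verify the reduced solution is a genuine eigenvector. The persistence-of-simplicity step and the $v_q\neq 0$ argument are handled correctly. If anything, you have supplied more detail than the paper, which is content to defer to the reference.
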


The version stated by \citeauthor{lax2007linear} considered the case $m=1$. Extending the result to the $m > 1$ case additionally demands that the derivatives of $B(t)$ are continuous in $t \in \mathbb{R}^m$. Consequently, the proof carries over without further changes. Using the previous results, we show a lemma that establishes the continuity of stationary distributions.

\begin{lemma} \label{thm:continuity_stationary_distribution_in_policy}
	Let the function $B: \mathbb{R}^m \rightarrow \mathbb{R}^{n \times n}$ be continuously differentiable. Furthermore, for every $t \in \mathbb{R}^m$, the resulting matrix $B(t)$ is a transition matrix of an irreducible and aperiodic Markov chain.
	Then the mapping on the stationary distribution given by $\pi: \mathbb{R}^m \rightarrow \mathbb{R}^n$ is continuously differentiable. 
\end{lemma}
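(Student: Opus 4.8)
The plan is to reduce the claim to an application of the Perron–Frobenius theorem (Theorem~\ref{thm:perron-frobenius}) together with the perturbation result of Theorem~\ref{thm:continuity_eigenvalue_eigenvector_by_lax}. The stationary distribution $\pi(t)$ of $B(t)$ is, by definition, the (suitably normalized) left eigenvector of $B(t)$ associated with the eigenvalue $1$. Since $B(t)$ is a stochastic matrix, $1$ is always an eigenvalue; since $B(t)$ is irreducible and aperiodic, Theorem~\ref{thm:irreducible-and-aperiodic-matrix-equals-primitive} tells us $B(t)$ is primitive, so Theorem~\ref{thm:perron-frobenius} applies and identifies $1$ as the Perron–Frobenius eigenvalue $r$: in particular, by part~\ref{thm:perron-frobenius-r-simple-root}, $1$ is a \emph{simple} root of the characteristic polynomial of $B(t)$, and by parts~\ref{thm:perron-frobenius-positive-left-right-eigenvectors} and~\ref{thm:perron-frobenius-eigenvectors-unique-up-to-scalar} the associated left eigenvector is strictly positive and unique up to scaling.

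First I would fix an arbitrary $t_0 \in \mathbb{R}^m$ and apply Theorem~\ref{thm:continuity_eigenvalue_eigenvector_by_lax} to the continuously differentiable matrix-valued map $t \mapsto B(t)^T$ (whose eigenvalues coincide with those of $B(t)$, and whose eigenvectors are the left eigenvectors of $B(t)$), at the simple eigenvalue $r_0 = 1$. This yields a neighborhood of $t_0$ on which there is a continuously differentiable eigenvalue branch $r(t)$ with $r(t_0) = 1$ and a continuously differentiable eigenvector branch $h(t)$ with $B(t)^T h(t) = r(t) h(t)$. Because $B(t)$ is stochastic for every $t$, the eigenvalue $1$ persists, and since $1$ stays simple (again by Perron–Frobenius applied at each $t$), the branch $r(t)$ must be identically $1$ near $t_0$; hence $h(t)$ is a genuine stationary (left) eigenvector of $B(t)$ for each $t$ in that neighborhood, and it varies in a $C^1$ fashion.

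It then remains to normalize. The true stationary distribution is $\pi(t) = h(t) / \big(\mathbf{1}^T h(t)\big)$, where $\mathbf{1}$ is the all-ones vector. By Perron–Frobenius part~\ref{thm:perron-frobenius-positive-left-right-eigenvectors}, every entry of $h(t)$ is nonzero of one sign, so the scalar $\mathbf{1}^T h(t)$ is bounded away from $0$ near $t_0$; since $h$ is $C^1$ and the reciprocal of a nonvanishing $C^1$ scalar function is $C^1$, the normalized map $\pi$ is continuously differentiable on a neighborhood of $t_0$. As $t_0$ was arbitrary, $\pi$ is continuously differentiable on all of $\mathbb{R}^m$, completing the proof.

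The main obstacle is the bookkeeping needed to be sure that the locally defined eigenvector branch supplied by Theorem~\ref{thm:continuity_eigenvalue_eigenvector_by_lax} actually tracks the \emph{stationary} eigenvector rather than some other eigenvalue branch, and that it cannot degenerate: this is exactly where simplicity of the eigenvalue $1$ (Perron–Frobenius part~\ref{thm:perron-frobenius-r-simple-root}) and strict positivity of the eigenvector (part~\ref{thm:perron-frobenius-positive-left-right-eigenvectors}) do the work, the former ruling out branch-crossing or non-smoothness, the latter guaranteeing the normalization constant never vanishes. Once these two facts are invoked correctly, the rest is routine.
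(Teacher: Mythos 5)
Your proposal is correct and follows essentially the same route as the paper's proof: primitivity via Theorem~\ref{thm:irreducible-and-aperiodic-matrix-equals-primitive}, Perron--Frobenius to identify $1$ as a simple eigenvalue with a strictly positive eigenvector, the perturbation result of Theorem~\ref{thm:continuity_eigenvalue_eigenvector_by_lax} applied to $B(t)^T$, the observation that the eigenvalue branch is identically $1$, and a final normalization whose denominator is nonvanishing by positivity. The only cosmetic difference is how you justify $r(t)\equiv 1$ (simplicity and persistence of the eigenvalue $1$, versus the paper's appeal to all other eigenvalues having modulus strictly less than $1$), which is an equivalent argument.
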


\begin{proof}
	\ The stationary distribution $\pi(t)$ to the matrix $B(t)$ exists and is unique, as $B(t)$ is irreducible and aperiodic (see Theorem 4.1 on page 119 \citep{Seneta2006}), which establishes that the function $\pi$ is well-defined. 
	
	To show that $\pi$ is continuously differentiable, we aim to use Theorem \ref{thm:continuity_eigenvalue_eigenvector_by_lax} for the matrix $B(t)^T$ as the stationary distribution $\pi(t)$ is a right eigenvector of $B(t)^T$ to the eigenvalue one.
	
	Note that by Theorem \ref{thm:irreducible-and-aperiodic-matrix-equals-primitive}, the matrix $B(t)$ is primitive, as it is an irreducible and aperiodic square matrix. That means there exists an $k \in \mathbb{N}$ such that $B(t)^k > 0$, showing that Theorem \ref{thm:perron-frobenius} holds for $B(t)$. The matrix $B(t)$ is row-stochastic so that its largest eigenvalue is one, which means by part \ref{thm:perron-frobenius-r-largest-eigenvalue} of Theorem \ref{thm:perron-frobenius} that one is also the Perron-Frobenius eigenvalue.
	Note that
	\begin{align*}
		B(t)^k > 0 \Leftrightarrow \left(B(t)^k \right)^T  > 0 \Leftrightarrow \left( B(t)^T \right)^k > 0.
	\end{align*}
	That means, $B(t)^T$ is primitive, so that the Perron-Frobenius theorem holds for $B(t)^T$ as well. Furthermore, for $\lambda \in \mathbb{R}$ it holds that
	\begin{align*}
		\det \left(B(t)^T - \lambda I \right) = \det \left( \left( B(t) - \lambda I \right)^T \right) = \det \left(B(t) - \lambda I \right),
	\end{align*}
	which means that $B(t)^T$ has the same eigenvalues as $B(t)$. Therefore, the eigenvalue one is also the Perron-Frobenius eigenvalue of $B(t)^T$.
	
	By part \ref{thm:perron-frobenius-r-simple-root} of Theorem \ref{thm:perron-frobenius}, one is a simple root of $B(t)^T$'s characteristic polynomial.
	We can now use Theorem \ref{thm:continuity_eigenvalue_eigenvector_by_lax} for $B(t)^T$. For this, let $t_0 \in \mathbb{R}^m$ be arbitrary. 
	Then, there exists a $\delta > 0$ such that, on the set $D_{t_0}:= \left\{ t \in \mathbb{R}^m: \norm{t - t_0} \right\}$, there are continuously differentiable functions $r: D_{t_0} \rightarrow \mathbb{R}, h: \Lambda \rightarrow \mathbb{R}^n$. Where $r(t)$ is an eigenvalue of $B(t)^T$ with $r(t_0) = 1$ and $h(t)$ is an eigenvector of $B(t)^T$ of the eigenvalue $r(t)$.
	
	By part \ref{thm:perron-frobenius-r-largest-eigenvalue} of Theorem \ref{thm:perron-frobenius}, all eigenvalues $\lambda$ of $B(t)^T$ with $\lambda \neq 1$ satisfy $\norm{\lambda} < 1$ for all $t \in D_{t_0}$. Therefore, it follows that $r \equiv 1$, as it is continuously differentiable on $D_{t_0}$. Therefore, $h(t)$ is a strictly positive eigenvector for all $t \in D_{t_0}$ according to part \ref{thm:perron-frobenius-positive-left-right-eigenvectors}.
	We define the scaling function
	\begin{align*}
		p(t) := \left( \sum_{i = 1}^{n} h(t)_{i} \right)^{-1} \Leftrightarrow p(t) \cdot \left( \sum_{i = 1}^{n} h(t)_{i} \right) = 1.
	\end{align*}
	Consequently, for all $t \in D_{t_0}$, the vector $p(t) \cdot h(t)$ is an eigenvector of $B(t)^T$ to the eigenvalue one. Particularly, $p(t) \cdot h(t)$ is a stationary distribution of $B(t)$, which means $\pi(t) = p(t)h(t)$ for all $t \in D_{t_0}$. Note that $p$ is continuously differentiable as $h$ is a strictly positive continuously differentiable function. Therefore, $\pi$ is continuously differentiable for all $t \in D_{t_0}$. As $t_0 \in \mathbb{R}^m$ has been arbitrary, $\pi$ is continuously differentiable for all $t \in \mathbb{R}^m$, which finishes the proof.
\end{proof}

Finally, before showing the main result, we need the following auxiliary lemma.

\begin{lemma} \label{thm:continuously_extendable_function_at_0}
	Let $m \in \mathbb{N}$. Then the function
	\begin{align} \label{equ:auxiliary-exp-fraction-function}
		h(x) = \begin{cases}
			\frac{m - 1 + e^{mx} - m e^{x}}{\left(e^{mx} - 1 \right)^2}  & \text{, } x \neq 0\\
			\frac{m-1}{2m} & \text{, }x = 0
		\end{cases}
	\end{align}
	is continuous for $x \in \mathbb{R}$.
\end{lemma}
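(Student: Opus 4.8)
The plan is to split $\mathbb{R}$ into the punctured line $\mathbb{R}\setminus\{0\}$ and the single point $0$. On $\mathbb{R}\setminus\{0\}$ continuity is immediate: the numerator $N(x)=m-1+e^{mx}-me^{x}$ and the denominator $D(x)=(e^{mx}-1)^2$ are sums, products and compositions of the continuous functions $x\mapsto e^{mx}$ and $x\mapsto e^{x}$, hence continuous; and $D(x)\neq 0$ for $x\neq 0$ because, with $m\in\mathbb{N}$, $e^{mx}=1$ forces $mx=0$, i.e.\ $x=0$. Thus $h=N/D$ is continuous on $\mathbb{R}\setminus\{0\}$ as a quotient of continuous functions with nonvanishing denominator.

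It remains to verify continuity at $x=0$, i.e.\ that $\lim_{x\to 0}h(x)=\frac{m-1}{2m}$. First I would expand the exponentials about $0$: $e^{mx}=1+mx+\tfrac{m^2x^2}{2}+O(x^3)$ and $me^{x}=m+mx+\tfrac{mx^2}{2}+O(x^3)$. Substituting into $N$, the constant terms $(m-1)+1-m$ and the linear terms $mx-mx$ both cancel, leaving
\[
N(x)=\frac{m(m-1)}{2}\,x^2+O(x^3).
\]
Likewise $e^{mx}-1=mx\bigl(1+O(x)\bigr)$, so $D(x)=m^2x^2\bigl(1+O(x)\bigr)$. Dividing and cancelling $x^2$,
\[
h(x)=\frac{\tfrac{m(m-1)}{2}x^2+O(x^3)}{m^2x^2+O(x^3)}
=\frac{\tfrac{m(m-1)}{2}+O(x)}{m^2+O(x)}
\xrightarrow[x\to 0]{}\frac{m(m-1)}{2m^2}=\frac{m-1}{2m},
\]
which is exactly the prescribed value $h(0)$. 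Hence $h$ is continuous at $0$, and therefore on all of $\mathbb{R}$. An alternative to the Taylor route would be to factor $x^2$ out of numerator and denominator explicitly, or to apply L'H\^opital's rule twice to $N/D$; the expansion seems cleanest.

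The computation is entirely routine; the only point that genuinely needs care — and the step everything hinges on — is the observation that the constant and linear coefficients of $N(x)$ vanish, so that $N$ and $D$ both have leading order $x^2$, making the singularity at $0$ removable with value precisely $\frac{m-1}{2m}$. I anticipate no real obstacle beyond keeping the Taylor bookkeeping straight; the degenerate case $m=1$ is even simpler, since then $N\equiv 0$ and $\frac{m-1}{2m}=0$, so $h\equiv 0$ is trivially continuous.
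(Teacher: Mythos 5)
Your proof is correct and follows essentially the same route as the paper: continuity away from $0$ by composition of continuous functions, and continuity at $0$ via the Taylor expansion of the exponentials, observing that the constant and linear terms of the numerator cancel so that numerator and denominator are both of order $x^2$ with ratio tending to $\frac{m-1}{2m}$. Your added checks (nonvanishing of the denominator for $x\neq 0$ and the degenerate case $m=1$) are welcome but do not change the argument.
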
 

\begin{proof}
	\ For $x \neq 0$, the function is continuous as concatenation of continuous functions. It remains to be shown that $h$ is continuous in $x = 0$. By using the series expansion of the exponential function given by $e^x = \sum_{n=0}^{\infty} \frac{x^n}{n!}$, we can write $h$ for $x \neq 0$ as
	\begin{align*}
		h(x) =& \frac{m - 1 + \sum_{n=0}^{\infty} \frac{-m x^n + (mx)^n}{n!}}{1 + \sum_{n=0}^{\infty} \frac{(2 m x)^n}{n!} - 2 \cdot \sum_{n=0}^{\infty} \frac{(mx)^n}{n!}} 
		= \frac{\frac{1}{2} \left(-mx^2 + m^2x^2 \right) + \mathcal{O}(x^3)}{m^2 x^2 + \mathcal{O}(x^3)} \\
		=& \frac{m -1 + \mathcal{O}(x)}{2m + \mathcal{O}(x)},
	\end{align*}
	where $\mathcal{O}$ denotes the Landau symbol. Therefore, we get $\lim_{x \rightarrow 0} h(x) = \frac{m - 1}{2m}$ which completes the proof.
\end{proof}

With the previous results established, we can finally proof Theorem \ref{thm:continuity-of-alpha-rank-in-vNM-and-alpha}.

\begin{proof}
	\ Denote with $C: (0, \infty) \times \mathbb{R}^{\sum_i m_i} \rightarrow \mathbb{R}^{|\mathcal{S}| \times |\mathcal{S}|}, (\alpha, v) \mapsto C(\alpha, u(\cdot, v))$ the matrix-valued function that maps the tuple of ranking intensity and vNM functions to the Markov-chain's transition matrix with entries $C_{s, s^{\prime}}(\alpha, u(\cdot, v))$ as defined in Section \ref{subsec:alpha-rank}. Then by Theorem 2.2.1 of \citep{omidshafieiARankMultiAgentEvaluation2019}, it is a transition matrix of an irreducible and aperiodic Markov-chain for every $\alpha$ and $v$. To use Lemma \ref{thm:continuity_stationary_distribution_in_policy} and show the statement, it remains to be shown that $C(\alpha, u(\cdot, v))$ is a continuously differentiable function. 
	
	Denote with $a^{i}_{s, s^{\prime}}(v_i) := u_i((s_i^{\prime}, s_{-i}), v_i) - u_i((s_i, s_{-i}), v_i)$ the difference for a unilaterally deviation of player $i$ and define $h: \mathbb{R} \rightarrow \mathbb{R}$ as in Equation \ref{equ:auxiliary-exp-fraction-function}. The function $a^{i}_{s, s^{\prime}}$ is continuously differentiable as $u_i$ is continuously differentiable in its second argument by assumption. Let $s, s^{\prime} \in \mathcal{S}, v_i \in \mathcal{V}_i$ be such that $a^{i}_{s, s^{\prime}}(v_i) \neq 0$. Consider the partial derivatives
	\begin{align}
		\frac{d}{d \alpha}C_{s, s^{\prime}}(\alpha, u(\cdot, v)) &= \eta \cdot e^{\alpha a^{i}_{s, s^{\prime}}(v_i) (m-1)} a^{i}_{s, s^{\prime}}(v_i) \cdot \frac{m -1 + e^{\alpha m a^{i}_{s, s^{\prime}}(v_i)} - m e^{\alpha a^{i}_{s, s^{\prime}}(v_i)}}{\left(e^{\alpha m a^{i}_{s, s^{\prime}}(v_i)} - 1 \right)^2} \nonumber \\
		&= \eta \cdot e^{\alpha a^{i}_{s, s^{\prime}}(v_i) (m-1)} a^{i}_{s, s^{\prime}}(v_i) \cdot h\left(\alpha a^{i}_{s, s^{\prime}}(v_i) \right) \label{equ:partial-derivatives-continuity-proof-d-alpha}\\
		\frac{d}{d v_i}C_{s, s^{\prime}}(\alpha, u(\cdot, v)) &= \eta \cdot \alpha \cdot e^{\alpha a^{i}_{s, s^{\prime}}(v_i) (m-1)} a^{i}_{s, s^{\prime}}(v_i) \cdot h\left(\alpha a^{i}_{s, s^{\prime}}(v_i) \right) \label{equ:partial-derivatives-continuity-proof-d-v}.
	\end{align}
	By Lemma \ref{thm:continuously_extendable_function_at_0}, the function $h$ is continuous and as the mapping $(x, y) \mapsto x \cdot y$ is continuous, $h\left(\alpha a^{i}_{s, s^{\prime}}(v_i) \right)$ is continuous in $\alpha$ and $v_i$ for every $i \in \mathcal{N}$. Therefore, the partial derivatives of Equations \ref{equ:partial-derivatives-continuity-proof-d-alpha} and \ref{equ:partial-derivatives-continuity-proof-d-v} are continuous as product of continuous functions. The entries representing self-connections, $C_{s, s}(\alpha, u(\cdot, v))$ are a sum of continuously differentiable functions and, therefore, also continuously differentiable. That means, the matrix valued function $C(\alpha, u(\cdot, v))$ is continuously differentiable in $\alpha$ and $v$. By Lemma \ref{thm:continuity_stationary_distribution_in_policy}, we get that the mapping on the stationary distribution $f(\alpha, u(\cdot, v))$ is continuous in $\alpha$ and $v$, which finishes the proof. 	
\end{proof}

\end{document}